\newtheorem{observation}{Observation}
\newtheorem{proposition}{Proposition}
\newtheorem{definition}{Definition}
\newcommand{\leaveout}[1]{}
\begin{document}
\title{Mechanism Design for Team Formation}
\author{Mason Wright\\Computer Science \& Engineering\\University of
  Michigan\\Ann Arbor, MI\\masondw@umich.edu \And Yevgeniy
  Vorobeychik\\Electrical Engineering and Computer Science\\Vanderbilt
University\\Nashville, TN\\yevgeniy.vorobeychik@vanderbilt.edu}

\maketitle

\begin{abstract}
\begin{quote}
Team formation is a core problem in AI. Remarkably, little prior work has addressed the problem of mechanism design for team formation, accounting for the need to elicit agents' preferences over potential teammates. Coalition formation in the related hedonic games has received much attention, but only from the perspective of coalition stability, with little emphasis on the mechanism design objectives of true preference elicitation, social welfare, and equity.
We present the first formal mechanism design framework for team formation, building on recent combinatorial matching market design literature. We exhibit four mechanisms for this problem, two novel, two simple extensions of known mechanisms from other domains. Two of these (one new, one known) have desirable theoretical properties. However, we use extensive experiments to show our second novel mechanism, despite having no theoretical guarantees, empirically achieves good incentive compatibility, welfare, and fairness.
\end{quote}
\end{abstract}

\section{Introduction}

Teamwork has been an important and often-studied area of artificial intelligence research. 
Typically, the focus is on coordinating agents to achieve a common goal.
The complementary problem of team formation considers how to form high-quality teams, whose agents have skills that are jointly well suited for a task~\cite{Marcolino13}.
Notable team formation applications include formation of research teams, class project groups, groups of roommates, or disaster relief teams.

Many prior team formation studies have assumed that agents are indifferent about which other agents they are teamed with, or have preferences known to the team formation mechanism.
Models dealing with known agent preferences over teammates, termed \emph{hedonic games}, have seen an extensive literature since being introduced by~\citeauthor{aumann1974cooperative}~\shortcite{aumann1974cooperative}.
In a hedonic game, the mechanism is given a set of agents, each having public preferences over which others might be on its team; the mechanism must partition the agents into teams based on their preferences.

Past research on hedonic games has focused on the problem of forming \emph{stable} coalitions, from which no set of agents would prefer to defect. 
Since a core partition may not exist in a hedonic game, even when preferences of players are additively separable~\cite{banerjee2001core}, 
%It has long been known that for hedonic games with additively separable preferences, a core stable allocation may not exist \cite{banerjee2001core}. 
much research is focused on alternative notions of stability, or on highly restricted agent preferences~\cite{bogomolnaia2002stability,alcalde2004researching,cechlarova2001stability}, or on the time complexity of testing core emptiness~\cite{ballester2004np,sung2010computational}.

We consider team formation as a mechanism design problem, where individuals have preferences over teammates, as in hedonic games.
As in traditional mechanism design (and unlike hedonic games), we assume that these preferences are private and must be elicited in order to partition players reasonably into teams.
We draw a connection to another budding literature, that of combinatorial matching market design, which has course allocation as a typical application~\cite{budish2010multi}.

An important concern in combinatorial matching, which we inherit, is the ex post fairness of allocations. For example, consider a simple randomized mechanism, \emph{random serial dictatorship}, which has been proposed for course allocation and is readily adapted to team formation. In random serial dictatorship, agents are randomly ordered by the mechanism and then take turns, in order, selecting their entire teams from among the remaining agents. Random serial dictatorship is \emph{strategyproof}, meaning that it is a dominant strategy for any agent to report its true preferences over teams. Random serial dictatorship is also \emph{ex post Pareto efficient}, in that any allocation it returns cannot be modified to improve an agent's welfare without reducing some other agent's (assuming no indifferences). But this mechanism results in a highly inequitable distribution of outcomes ex post.

Budish and Cantillon~\shortcite{budish2010multi} proposed a more sophisticated alternative, \emph{approximate competitive equilibrium from equal incomes (A-CEEI)}, which is \emph{strategyproof-in-the-large} (i.e., when the number of players becomes infinite), and provably approximately fair~\cite{budish2010multi,budish2011combinatorial}.
The work on combinatorial matching in turn follows earlier work on bipartite matching and school choice~\cite{roth1999redesign,abdulkadiroglu2003school}.

%This paper takes a mechanism design approach to hedonic games, allowing the ``economist as engineer'' to design an allocation mechanism with ``good properties'': incentive compatibility, fairness, and high social welfare. The mechanism design perspective on assignment problems has produced multiple successful applications in prior work. Recently, Budish and Cantillon designed a course allocation mechanism for Harvard Business School, called approximate competitive equilibrium from equal incomes, or A-CEEI \cite{budish2010multi}. The  mechanism is strategyproof-in-the-large, ensures envy bounded by a single teammate, and guarantees agents an approximate form of maximin shares \cite{budish2011combinatorial}. Abdulkadiroglu and S{\"o}nmez designed a school choice mechanism, derived from the top trading cycles mechanism, that is strategyproof for students and ensures an approximate form of Pareto efficiency \cite{abdulkadiroglu2003school}. Roth and Peranson redesigned the matching mechanism for American medical residents, using instability-chaining to yield low incentives to misreport preferences \cite{roth1999redesign}.

Our contributions are as follows. 
\begin{enumerate}
\item We present the problem of mechanism design for team formation, focused on achieving (near-)incentive compatible preference reporting, high social welfare, and fair allocation.  This problem is closely related to both combinatorial and bipartite matching market design, but is distinct from both in two senses: first, the matching is not bipartite (players match to other players), and therefore typical matching algorithms which only guarantee strategyproofness for one side are unsatisfactory; and second, mechanisms used in combinatorial exchanges to provide fairness guarantees are not directly applicable, as they rely on having a fixed set of items which are the subject of the match and which are not themselves strategic;
\item we extend two well-known mechanisms (random serial dictatorship and Harvard Business School draft) used for combinatorial matching to our setting;
\item we propose two novel mechanisms for our setting (A-CEEI for team formation, or A-CEEI-TF, and one-player-one-pick draft, or OPOP); %, and present an implementation of A-CEEI-TF, which is highly non-trivial;
\item we prove that A-CEEI-TF is approximately fair and strategyproof-in-the-large;
\item we offer empirical analysis of all mechanisms, which shows that our second mechanism, OPOP, outperforms others on most metrics, and has better incentive properties than A-CEEI-TF.
\end{enumerate}
An important and surprising finding of our investigation is that the simple draft mechanism we propose empirically outperforms the more complex A-CEEI-TF alternative by a large margin in fairness and incentive compatibility, even while A-CEEI-TF has more compelling theoretical guarantees.

\section{Mechanism Design Problem}

Our point of departure is the formalism of \emph{hedonic games}.
We define a \emph{hedonic game} as a tuple $(N, \succ)$, where $N$ is the set of players, and $\succ$ is a vector containing each player's preference order over sets of other players that it could be teamed with. 
The task is to partition the players in $N$ into a coalition structure, where each player is in exactly one coalition.

We assume that player preferences are \emph{additively separable}~\cite{aziz2011stable}, which means that there exists an assignment of values $u_i(j)$ for all players $i$ and their potential teammates $j$, so that $i$'s total utility of a subset of others $S$ is $\sum_{j \in S} u_i(j)$ (which induces a corresponding preference ordering over subsets of possible teammates).
In addition, we assume that $u_i(j) \ge 0$ for all $i,j$. 

These assumptions are useful for two reasons. First, in many data sets that record preferences of individuals over others, the preferences are entered as non-negative values for individuals, as in rank order lists or Likert ratings of individuals. Additive separable preferences are the most natural way to induce preferences over groups from such data. Second, many prior studies in team formation and the related domain of course allocation have assumed that agents have non-negative, additive separable preferences, as in the $\mathcal{B}$-preferences of \cite{cechlarova2001stability} and in the bidding points auction.

Most prior work on hedonic games focuses on coalition stability.
Our goal is distinct: We take as input player preferences over teams (that is, over others that they could be teamed with), which we assume to be additive with non-negative values, and output a partition of the players into teams.
We assume that it is subsequently difficult for players to alter team membership.
Our primary challenge, therefore, is to encourage players to report their preferences honestly, and form teams that are fair and yield good teammate matchings; all three notions shall be made precise presently.
Note that in this construction we assume that no money can change hands (unlike the work by \citeauthor{Li04}~\shortcite{Li04}).

Observe that in our model, all players always prefer to be put on a single team (since values for all potential teammates are positive).
In reality, many team formation problems have hard constraints on team sizes (or, equivalently, on the number of teams), particularly when multiple tasks need to be accomplished.
For example, project teams usually have an upper bound on size.
We capture this by introducing team size constraints; formally, the size of any team must be in the interval $[\underline{k},\overline{k}]$, with $\underline{k} \ge 1$, $\overline{k} \le |N|$, and $\underline{k} \le \overline{k}$.
%We extend additively separable hedonic games (henceforth, ASHGs) with \emph{coalition size constraints}. It is natural to consider that in real-world team formation problems, it may not be practical to form the grand coalition, with all agents on one team, even though this is preferred by the agents in an ASHG with non-negative values. We let $\underline{k} \geq 1$ be the minimum allowed team size and $\overline{k} \leq \|N\|$ be the maximum allowed team size. Additionally, we require that $\underline{k} \leq \overline{k}$. 
For example, if a classroom with $25$ students must be divided into $6$ approximately equal-size teams, we could have $\underline{k} = 4$ and $\overline{k} = 5$. 
We assume throughout that the specific values of $\underline{k}$ and $\overline{k}$ admit a feasible allocation. (This is not always the case; see supplemental material for details.)

In contrast with a typical approach in mechanism design, which seeks to maximize a single objective such as social welfare or designer revenue, subject to a constraint set, we take an approach from the matching market design literature, and seek a collection of \emph{desirable properties} (see, e.g., \citeauthor{Budish12}~\shortcite{Budish12}).
Specifically, we consider three properties: incentive compatibility, social welfare, and fairness.
Given the fact that all three cannot be achieved simultaneously in our setting, we will analyze the extent to which each can be achieved through specific mechanisms.

%\smallskip
\paragraph{Incentive Compatibility}
Incentive compatibility holds if there is no incentive for an agent to misreport its preferences.
We consider two forms of incentive compatibility: \emph{strategyproofness}, which means that it is a dominant strategy for any agent to report its true preferences, and \emph{ex post equilibrium}, which means that it is a Nash equilibrium for all agents to report their true preferences.
The former will be considered in theoretical analysis, while the latter will be the focus of empirical incentive assessment.
In particular, our theory will focus on \emph{strategyproofness-in-the-large}~\cite{budish2011combinatorial}, defined as follows.
Consider a market where each agent has been replaced with a measure-one continuum of replicas of itself, such that each individual agent has zero measure and all agents are price takers. A mechanism is strategyproof-in-the-large if, in such a market, it is a dominant strategy for each agent to reveal its true preferences. An example of a mechanism that is not strategyproof-in-the-large is the Harvard Business School draft considered below, in which an agent may benefit from misreporting its preferences, regardless of its own measure relative to the market size \cite{budish2010multi}.
In empirical analysis, in contrast, we determine a lower bound on the \emph{regret of truthful reporting}, which is the most any agent can gain ex post by misreporting preferences when all others are truthful.

%Our criterion for incentivizing honest reporting is \emph{strategyproofness}.
%An allocation mechanism is \emph{strategyproof} if it is a dominant strategy for each agent to reveal its true preferences. In other words, if a mechanism is strategyproof, then no agent can increase its expected payoff by misreporting preferences, no matter what the other agents do.

%One measure of the incentive to lie is the \emph{regret of truthful reporting}. This measure is similar to the game theoretic \emph{regret} of a strategy profile, which is the maximum over agents of the maximum over alternative strategies for the agent, of the expected increase in payoff from unilaterally deviating to that strategy. The regret of truthful reporting is the maximum any agent can gain in expected payoff by misreporting its preferences, while all other agents reveal their true preferences.

%Budish proposes an approximate form of strategyproofness, called \emph{strategyproof-in-the-large} \cite{budish2011combinatorial}. Consider a market where each agent has been replaced with ``a measure-one continuum of replicas'' of itself, such that each individual agent has zero measure and all agents are ``price takers.'' A mechanism is strategyproof-in-the-large if, in such a market, it is a dominant strategy for each agent to reveal its true preferences. An example of a mechanism that is not strategyproof-in-the-large is the HBS draft, in which an agent may benefit from misreporting its preferences, regardless of its own measure relative to the market size \cite{budish2010multi}.

\paragraph{Social Welfare}
As in traditional mechanism design, we consider social welfare as one of our primary design criteria.
Social welfare is just the sum of player utilities achieved by a specific partition of players into teams.
Formally, if $\mathcal{Q}$ is a partition of players, social welfare is defined as $SW(\mathcal{Q}) = \frac{1}{|N|}\sum_{S \in \mathcal{Q}} \sum_{i,j \in S} u_i(j)$.
In addition, we consider the weaker notion of \emph{ex post Pareto optimality} when discussing alternative mechanisms and their theoretical properties.
A partition of players $\mathcal{Q}$ is ex post Pareto optimal if no other partition strictly improves some agent's utility without lowering the utility of any other agent.

%\smallskip
\paragraph{Fairness}
The measure of fairness we consider is \emph{envy-freeness}.
%We consider a few measures of fairness in an allocation. 
An allocation is \emph{envy-free} if each agent weakly prefers its own allocation to that of any other agent. 
An approximate notion of envy-freeness that we adopt from Budish~\shortcite{budish2011combinatorial} is \emph{envy bounded by a single teammate}, in which any allocation an agent prefers to their own ceases to be preferred through removal of a single teammate from it.\footnote{In the supplemental material we discuss another measure of fairness.}
%An allocation is \emph{proportional} if each agent's utility is at least as great as the total utility of all goods to that agent, divided by the number of teams that are formed. An allocation satisfies the \emph{maximin share guarantee} if for each agent, the agent weakly prefers its allocation to its \emph{maximin share}. A maximin share for an agent is the agent's least-preferred bundle in a \emph{maximin split} for the agent, which is a partition of all goods into one bundle per agent that maximizes the utility of the least-valuable bundle. Note that envy-freeness implies proportionality, and proportionality implies maximin shares \cite{procaccia2014fair}.
The following negative result makes apparent the considerable challenge associated with the design problem we pose.

%A weaker guarantee than envy-freeness, introduced by Budish \cite{budish2011combinatorial}, is \emph{envy bounded by a single good}. An agent's envy is bounded by a single good, if and only if for any other agent whose allocation it prefers to its own, there exists one item in the other agent's bundle such that, if the item were removed, the agent would prefer its own bundle instead.

%\smallskip
\begin{proposition}
\label{T:envyBoundedBySingle}
There may not exist a partition of players that bounds envy by a single teammate.
\end{proposition}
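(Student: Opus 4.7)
I would prove this impossibility by exhibiting a small hedonic game in which no feasible partition satisfies envy bounded by a single teammate. The essential idea is to use teams large enough that the envy-reducing deletion of one teammate does not destroy the envied bundle's value, while designing preferences whose asymmetry cannot be balanced by any partition.

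Concretely, I would take $|N| = 6$ with fixed team size $\underline{k} = \overline{k} = 3$, and additively separable, non-negative preferences defined by $u_i(j) = 1$ when $j \in \{1,2,3\}$ and $i \neq j$, and $u_i(j) = 0$ when $j \in \{4,5,6\}$. Every feasible partition splits $N$ into two triples. Team sizes of at least three are necessary for such a counterexample, because with teams of size two, removing the sole teammate drops any envied bundle's value to $0$ and envy is trivially bounded.

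Because the three ``desirable'' players $\{1,2,3\}$ are indistinguishable under the preferences, and likewise the ``undesirables'' $\{4,5,6\}$, every partition falls up to relabeling into one of two structural cases based on how many desirables appear on each team: $(3,0)$ or $(2,1)$. In the $(3,0)$ case, each undesirable has utility $0$ and envies every slot of the desirable team, whose bundle consists of the other two desirables and is valued by the envier at $2$; deleting any single teammate still leaves value $1 > 0$, so envy cannot be bounded. In the $(2,1)$ case, the isolated desirable player (say $3$) obtains utility $0$ and envies the slot occupied by the undesirable on the other team, whose bundle consists of the two grouped desirables valued by $3$ at $2$; deleting either teammate again leaves value $1 > 0$.

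Since these cases exhaust every feasible partition up to symmetry, no partition bounds envy by a single teammate. The main obstacle is really just selecting a counterexample that is simultaneously small enough to analyze by hand and structurally forced enough that every partition fails; ruling out the $(2,1)$ split is the non-obvious step, since one might hope that distributing the desirable players evenly repairs fairness, but the lone desirable stranded with undesirables is inevitably left behind.
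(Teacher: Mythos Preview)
Your proof is correct and takes a genuinely different route from the paper's. Both arguments use six agents with $\underline{k}=\overline{k}=3$, but the constructions and the accompanying case analyses are structurally distinct.

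The paper builds a ``star plus cycle'' instance: one universally coveted agent $A$ valued at $8$ by everyone else, and a cyclic pattern of second-favorites among $B,\ldots,F$ with geometrically decaying values $4,2,1,0$. The argument is a pigeonhole step: among the three agents not teamed with $A$, at least one also misses its second favorite, so that agent's own team is worth at most $3$ while some other agent holds the bundle $\{A,\text{second favorite}\}$ worth $12$; even after deleting $A$ the remainder is worth $4>3$. Your construction is instead a clean bipartite split into three ``desirable'' and three ``undesirable'' agents with binary values, reducing the analysis to just two symmetry classes of partitions.

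What each approach buys: your example is shorter to state and to verify, and the two-case exhaustion is immediate from the symmetry of the value matrix. The paper's example, on the other hand, uses strict (pairwise distinct) individual values, so it shows the impossibility persists even without indifferences---your instance has many ties (every agent values $4,5,6$ identically at $0$), which is fine for the proposition as stated but would not survive a strengthening to strict preferences. Either argument fully establishes the claim.
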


\subsubsection{Proof} Consider a team formation problem with $6$ agents, $\{A, B, C, D, E, F\}$, $\underline{k} = 3$, $\overline{k} = 3$, so that two equal-size teams must be formed. The agents' additive separable preferences are encoded in Table \ref{tab:envyBoundSingle}.

\begin{table}[h!]
\centering
\begin{tabular}{ c | c c c c c c c }
     & $A$ & $B$ & $C$ & $D$ & $E$ & $F$ \\
  \hline
  $A$ & x & 0 & 1 & 2 & 4 & 8 \\
  $B$ & 8 & x & 4 & 2 & 1 & 0 \\
  $C$ & 8 & 0 & x & 4 & 2 & 1 \\
  $D$ & 8 & 1 & 0 & x & 4 & 2 \\
  $E$ & 8 & 2 & 1  & 0 & x & 4 \\
  $F$ & 8 & 4 & 2 & 1 & 0 & x \\
\end{tabular}
\caption{Each row $i$ encodes the additive separable value for agent $i$ of each other agent.}
\label{tab:envyBoundSingle}
\end{table}

No partition of these agents into two teams of size $3$ gives every agent envy bounded by a single teammate. To see this, consider that each agent other than $A$ has a bliss point on a team with $A$ and one other agent, where the second agent is $C$ for agent $B$, $D$ for agent $C$, and so on until ``wrapping around'' with $B$ for agent $F$. Three of the agents will not be on a team with agent $A$, and at least one of these agents, say agent $i$, will not be on a team with its second-favorite agent either. Some other agent $j$ must then be on a team with the two most-preferred agents of the player $i$. By construction, player $i$ is on a team of value $3$ or less, while the team of agent $j$ has value $12$ to agent $i$, and value $4$ to agent $i$ with its more valuable player (player $A$) removed. Therefore, envy cannot be bounded by a single teammate for all agents. \qed

\section{Team Formation Mechanisms}

We describe four mechanisms for team formation:
two are straightforward applications of known mechanisms, while two are novel.

\subsection{Random Serial Dictatorship}

\emph{Random serial dictatorship} (RSD) has previously been proposed in association with school choice problems~\cite{abdulkadiroglu2003school}.
In RSD, players are randomly ordered, and each player chosen in this order selects his team (with players thereby chosen dropping out from the order).
The process is repeated until all players are teamed up.

\begin{proposition}
\label{T:rsdic}
Random serial dictatorship is strategyproof, and ex post Pareto efficient as long as players choosing later cannot choose a larger team. \footnote{The proofs of this and other results are in the supplemental material.}
\end{proposition}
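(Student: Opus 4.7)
The plan is to prove the two claims separately, beginning with strategyproofness, which is the more straightforward direction, and then turning to ex post Pareto efficiency, where the team-size condition actually gets used.

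For \textbf{strategyproofness}, I would fix an agent $i$ with true additive utilities $u_i$, fix the realized random order, and fix every other agent's reported preferences. The key observation is that $i$'s own report $\hat{u}_i$ affects the output only when it is $i$'s turn to act as dictator: earlier dictators pick teams on the basis of their own reports (and do not condition on $i$'s report in any way), and once $i$'s team is fixed $i$ is removed from the pool, so later dictators cannot alter it. When $i$'s turn comes, the mechanism produces the team $S \ni i$ that maximizes $\sum_{j \in S} \hat{u}_i(j)$ over the feasible teams compatible with the remaining players and size constraints. Reporting $\hat{u}_i = u_i$ therefore causes the mechanism to return the team that maximizes $i$'s true utility over the very same feasible set, so truthful reporting is a weakly dominant strategy.

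For \textbf{ex post Pareto efficiency}, I would argue by induction along the realized dictator order $i_1, i_2, \ldots, i_{|N|}$. Let $\mathcal{Q}$ be the RSD output and suppose for contradiction that some $\mathcal{Q}'$ weakly Pareto dominates $\mathcal{Q}$ and strictly improves at least one agent. The inductive claim is that $i_k$'s team in $\mathcal{Q}'$ coincides with its team in $\mathcal{Q}$. For the base case, $i_1$ selects in $\mathcal{Q}$ the strictly most preferred feasible team containing itself from the entire pool; since $\mathcal{Q}'$ can only offer $i_1$ a weakly preferred team and indifferences are ruled out by hypothesis, the two teams must be identical. For the inductive step, once the teams of $i_1, \ldots, i_{k-1}$ are the same in $\mathcal{Q}$ and $\mathcal{Q}'$, the remaining pool of players available to $i_k$ is identical in both allocations, and $i_k$'s $\mathcal{Q}$-team is, by construction, its favorite feasible team drawn from that pool. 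Here the team-size condition enters: the hypothesis that later dictators cannot choose a larger team guarantees that any team $i_k$ is assigned in $\mathcal{Q}'$ (whose size is bounded by what later dictators leave achievable, hence no larger than what $i_k$ could have selected in RSD) was in fact among $i_k$'s feasible choices at the moment of its RSD pick. By strict preference, $i_k$'s two teams must again coincide. Iterating, $\mathcal{Q}' = \mathcal{Q}$, contradicting the assumption that the domination is strict somewhere.

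The main obstacle, and the reason the extra hypothesis on team sizes appears in the statement, is exactly this inductive step for Pareto efficiency: without the monotonicity of team sizes along the dictator order, the alternative allocation $\mathcal{Q}'$ could legitimately place $i_k$ on a team larger than any team that was feasible for $i_k$ in RSD (because in RSD some later dictator would then have been left with an infeasible residual), and $i_k$ could in principle strictly prefer such a team. Pinning down that ``feasible in $\mathcal{Q}'$ implies feasible in RSD at $i_k$'s pick'' is where the hypothesis is used, and everything else is bookkeeping.
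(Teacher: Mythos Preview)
Your proposal is correct and follows essentially the same approach as the paper: for strategyproofness, both argue that an agent's report matters only if it becomes a captain, in which case truthful reporting yields its optimal feasible team; for ex post Pareto efficiency, both proceed by induction along the dictator order, showing each captain's team must coincide in any Pareto-dominating partition. The only cosmetic difference is emphasis: the paper leans on non-negativity of values (``no other agent could swap or take agents from the first captain's team without decreasing its value''), whereas you foreground the team-size monotonicity hypothesis to ensure that $i_k$'s $\mathcal{Q}'$-team was already feasible at $i_k$'s RSD pick---but these are two sides of the same coin given the standing assumptions.
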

While RSD is ex post Pareto efficient, this turns out to be a weak guarantee, and does not in general imply social welfare maximization, something that becomes immediately apparent in the experiments below.
Envy-freeness is, of course, out of the question due to Proposition~\ref{T:envyBoundedBySingle}.

\subsection{Harvard Business School (HBS) Draft}

Players are randomly ordered, with the first $T$ assigned as captains.
We then iterate over captains, first in the random order, then in reverse, alternating. The current team captain selects its most-preferred remaining player to join its team, based on its reported preferences. 
\begin{proposition}
\label{T:hbs}
HBS draft is not strategyproof or ex post Pareto efficient.
\end{proposition}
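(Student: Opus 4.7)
The plan is to prove each part via a small counterexample.

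For non-strategyproofness, I would exhibit a two-captain instance with team size $3$, making the HBS serpentine order $A, B, B, A$, together with four non-captain players $W, X, Y, Z$. Set captain $A$'s additive values as $u_A(W) = 10$, $u_A(X) = 9$, $u_A(Y) = 8$, $u_A(Z) = 0$, and captain $B$'s as $u_B(X) = 9$, $u_B(Z) = 8$, $u_B(Y) = 5$, $u_B(W) = 0$. The truthful draft trace is: $A$ picks $W$, $B$ picks $X$, $B$ picks $Z$, $A$ picks $Y$, yielding $A$ total utility $10 + 8 = 18$. Under the misreport $X \succ W \succ Y \succ Z$ by $A$, the trace becomes: $A$ picks $X$; with $X$ gone and $u_B(Z) > u_B(Y) > u_B(W)$, $B$ picks $Z$ and then $Y$, leaving $W$ for $A$'s final pick, so $A$'s utility is $9 + 10 = 19 > 18$, a strict improvement.

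For ex post Pareto inefficiency, I would use a two-captain, team-size-$2$ instance with non-captains $W, X$. Both captains are indifferent between $W$ and $X$ (each assigns value $1$ to each), while $W$ strictly prefers $B$ (with $u_W(B) = 10$, $u_W(A) = 0$) and $X$ strictly prefers $A$ (with $u_X(A) = 10$, $u_X(B) = 0$). Under any tie-breaking in which $A$ picks $W$ first, HBS produces the partition $\{A, W\}, \{B, X\}$. The alternative partition $\{A, X\}, \{B, W\}$ leaves both captains equally well off (by the tie) but raises both non-captains' utilities from $0$ to $10$, Pareto-dominating the HBS outcome.

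The main obstacle is the non-strategyproofness construction. The manipulation relies on a chain of effects: $A$'s change of first-round pick must shift $B$'s two intervening picks to a different pair, so that the leftover for $A$'s position-$4$ pick is strictly more valuable to $A$. Tuning the eight captain values so that this chain yields a strict numerical improvement takes a bit of trial and error, but once the values are fixed the verification is a direct draft trace. The Pareto inefficiency argument is essentially immediate once indifferences among captain preferences are permitted, since then the two non-captains can be freely swapped between teams with no effect on captain utility.
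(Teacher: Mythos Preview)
Your non-strategyproofness counterexample is correct for the specific draft order $A,B,B,A$: the trace checks out and $A$'s true utility rises from $18$ to $19$ under the misreport. However, the paper defines HBS as a \emph{randomized} mechanism---players are randomly ordered and the first $T$ become captains---and treats strategyproofness for a randomized mechanism as the requirement that truthful reporting maximize \emph{expected} utility. You have analyzed only one realization of the random order, and you have not specified the preferences of $W,X,Y,Z$ at all; these would be needed for orderings in which one of them is a captain. The paper's proof fixes preferences for all six agents and then verifies that the deviating agent gains under $24$ of the $720$ orderings and never loses under the rest, yielding a strict expected gain. Your construction can likely be completed in the same way (in particular, whenever $A$ is not a captain its report is irrelevant, so the misreport is automatically neutral there), but as written the argument is incomplete for the randomized mechanism.

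Your Pareto-inefficiency example is valid but materially weaker than the paper's. With $\overline{k}=2$ each captain makes a single pick, so HBS collapses to random serial dictatorship, which the paper shows is ex post Pareto efficient under \emph{strict} preferences; your example works only because the captains are indifferent, and it further depends on a particular tie-breaking choice. The paper instead gives a six-agent, team-size-$3$ instance with strict preferences and a fixed order in which the HBS outcome $\{(ACD),(BEF)\}$ is Pareto-dominated by $\{(AEF),(BCD)\}$. That construction shows the draft can be inefficient even without ties and independently of tie-breaking, which is the more robust statement.
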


\subsection{One-Player-One-Pick (OPOP) Draft}

Players are randomly ordered. Given the list of team sizes, the first $T$ players are assigned to be captains of the respective teams. Then iterate over the complete player list. If the next player is a team captain, it selects its favorite unassigned agent to join its team. If the next player is unassigned, it will be assigned to join its favorite incomplete team (as defined below), and if the team still has space, this player chooses its favorite unassigned agent to join them.
%. If the next player has been assigned to a team that is incomplete, it selects its favorite unassigned agent to join its team. Otherwise the player does nothing. 
We define a ``favorite'' incomplete team for an agent as follows.
%In the OPOP draft, we select the ``favorite'' incomplete team for an unassigned player to join as follows. 
Let $S$ be an incomplete team with $v_S$ vacancies. Let the mean value to player $i$ of the unassigned players be $\mu_i$. 
We then assign the following utility of an incomplete team $S$ to agent $i$:
%Incomplete team $S$ has expected value to unassigned agent $i$:
%\begin{align*}
\(\sum_{j \in S} u_i(j) + (v_S - 1) \mu_i.\)
%\end{align*}
%In words, player $i$ expects team $S$ to be worth the sum of the values of its current agents, plus the mean value of other unassigned agents, times the number of vacancies left minus $1$ for the space claimed by $i$ itself. An unassigned player in the OPOP draft will be assigned to the team of maximum expected value, based on its reported preferences.
%The OPOP draft is not strategyproof, Pareto efficient, or fair.

%\smallskip
%\noindent
\begin{proposition}
\label{T:opopic}
The One-Player-One-Pick draft is not strategyproof or ex post Pareto efficient.
\end{proposition}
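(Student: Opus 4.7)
The plan is to establish both failures by constructing small explicit counterexamples, mirroring the style of the proof sketch for the HBS result in Proposition~\ref{T:hbs}. I would fix the random ordering so the mechanism becomes deterministic on the chosen draw, and work with the smallest nontrivial configuration that exercises both phases of OPOP, namely two teams of size three with six agents. This keeps the case analysis short while still separating the captain-pick phase from the non-captain team-join phase, which are the two places manipulation and inefficiency can enter.

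For the failure of strategyproofness, my target is the team-joining step for a non-captain agent. That agent is routed to its \emph{formula-favorite} incomplete team, where the formula adds to the team's current reported value a correction term $(v_S-1)\mu_i$, with $\mu_i$ the mean reported value of still-unassigned players. A non-captain can therefore shift the formula ranking of teams by perturbing its reported values on unassigned players, without changing what the captains pick first (those picks strictly precede this step). I would design preferences so that under truthful reporting the formula directs the non-captain to the team it in fact prefers less (because the mean-value correction dominates the difference in current team values), while a small, carefully chosen misreport on some uninteresting player lowers $\mu_i$ just enough to flip the ranking and land the agent on the team containing its top preferred teammates. The eventual total utility under the misreport then strictly exceeds the truthful one.

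For non-Pareto efficiency I would reuse a similar instance. The lever is that once captains have made their picks, subsequent non-captains are routed by the formula rather than by captain preferences, and no post-hoc swap is permitted. I can rig preferences so that the resulting partition $\mathcal{Q}$ is strictly improved by swapping two non-captains between the two teams: choose one captain that strongly prefers a non-captain it did not receive, and a swap partner who is (weakly) indifferent between the two completed teams, so the swap strictly helps the captain and hurts no one. Exhibiting such a $\mathcal{Q}'$ contradicts ex post Pareto efficiency.

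The main obstacle will not be algebra but bookkeeping: the set of unassigned agents and the value of $\mu_i$ both change at every step, so each counterexample must be traced step by step to verify that the intended pick sequence and team assignment actually occur, both under truth and under the proposed deviation (or in the alternative partition). I expect the cleanest presentation is two small tables of utilities and one line per pick showing which team is formula-favorite at that moment; this is exactly the level of detail used for Proposition~\ref{T:hbs} and suffices to rule out both strategyproofness and ex post Pareto efficiency.
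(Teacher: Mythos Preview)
Your Pareto-efficiency plan is fine and matches the paper's: fix the order, trace OPOP on a six-agent, two-teams-of-three instance, and exhibit an alternative partition (a swap of two agents between the teams) that every agent weakly prefers and some strictly prefer. The paper does exactly this, in fact obtaining a strict improvement for all six agents.

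Your strategyproofness plan has a concrete gap. In the configuration you chose, the lever you want to pull does not exist. After each captain makes its single pick, both teams have exactly one vacancy; hence for any unassigned non-captain $i$ the correction term $(v_S-1)\mu_i$ is zero on \emph{both} sides, and the ``favorite incomplete team'' formula collapses to a direct comparison of $\sum_{j\in S_1}u_i(j)$ versus $\sum_{j\in S_2}u_i(j)$. Perturbing reported values on unassigned players changes $\mu_i$ but has no effect on this comparison. Worse, since joining a one-vacancy team fills it immediately, the team the non-captain joins is its final team, so truthful reporting is trivially optimal at that step. The same holds for a non-captain who was already picked by a captain: its one remaining pick fills its own team, and picking its true favorite is dominant. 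In short, with $\underline{k}=\overline{k}=3$ and $|N|=6$, no non-captain has a profitable deviation.

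The paper's counterexample instead manipulates a \emph{captain}. The first captain makes only one pick, but that pick reshapes the pool for the second captain and for every subsequent join; by misreporting so as to forgo its myopic favorite for a slightly less-valued agent, the captain steers the downstream choices of others and lands on a strictly better final team (in expectation over orders). If you want to keep your non-captain idea, you must enlarge the instance so that teams can have unequal vacancy counts when a non-captain joins (e.g., teams of size four or more); otherwise, switch the deviating agent to a captain.
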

%\textbf{Observation 2.} \emph{The One-Player-One-Pick draft is not strategyproof.}

\leaveout{
 It may be less obvious that the OPOP draft is not strategyproof, compared to the HBS draft, because in the OPOP draft, each player makes at most one selection. In the HBS draft, a player can benefit from under-reporting its value for unpopular items, because such items are likely to remain available after multiple rounds of the draft, while less-valuable but more popular items are likely to be taken in an early round. In the OPOP draft, no such incentive to lie exists, because each player makes at most one selection.

Note that the HBS draft and OPOP draft are strategyproof for hedonic games with $\overline{k} = 2$, because in such cases these mechanisms are equivalent to random serial dictatorship, as each team captain chooses only one other player to join it.
}

%\smallskip
%\noindent
%\emph{Example.} 

\subsection{Competitive Equilibrium from Equal Incomes}

We now propose a more complex mechanism, based on the \emph{Competitive Equilibrium from Equal Incomes (CEEI)}, which is explicitly designed to achieve allocations that are more ex post fair than the alternative mechanisms.

%\paragraph{Definition}

We begin by defining CEEI, previously introduced by Varian~\shortcite{Varian74}. Given a set of agents $N$, a set of goods $C$, and agent preferences over bundles of goods $\succ$, a CEEI mechanism finds a budget $b \in \mathbb{R}_+$ and price vector $p^* \in \mathbb{R}_+^{|C|}$, such that if each agent is allocated its favorite bundle of goods that costs no more than $b$, then each good in $C$ is allocated to exactly one agent in $N$, or divided in fractions summing to $1$ among the $N$.
In combinatorial allocation problems, such as course allocation, goods (seats in a class) are not divisible, and certain bundles of goods (class schedules) are not allowed to be assigned to an agent. As a result, an exact market clearing tuple $(b, p^*)$ may not exist.
To deal with this difficulty, CEEI was relaxed by Budish~\shortcite{budish2011combinatorial} to an approximate version, termed A-CEEI.
%\emph{Approximate competitive equilibrium from equal incomes} (A-CEEI) is a modified version of CEEI that is applicable to the course allocation problem. 
%In course allocation, goods (seats in a class) are not divisible, and certain bundles of goods (class schedules) are not allowed to be assigned to an agent. As a result, an exact market clearing tuple $(b, p^*)$ may not exist. 
A-CEEI works by assigning nearly equal budgets to all agents, then searching for an approximately market clearing price vector and returning the allocation induced by those prices. The result may not clear the market exactly, but there is an upper bound on the worst-case market clearing error. The resulting allocation satisfies an approximate form of \emph{envy-freeness}~\cite{budish2011combinatorial}.
% and an approximate \emph{maximin share guarantee} \cite{budish2011combinatorial}.

Both CEEI and A-CEEI take advantage of the dichotomy between agents and items which agents demand.
This makes our setting distinct: agents' demand in team formation is over subsets of other agents.
A technical consequence is that this gives rise to a hard constraint for CEEI that if an agent $i$ is paired with agent $j$, than $j$ must also be paired (assigned to) agent $i$; any relaxation of this constraint fails to yield a partition on the agents and consequently does not result in an admissible mechanism.
We therefore design an approximation of CEEI, termed A-CEEI-TF, that accounts for the specific peculiarities of our setting.
Conceptually, the A-CEEI-TF mechanism works by alternating between two steps. First, it searches in price space for approximate relaxed market-clearing prices. Second, it assigns a randomly selected unmatched agent to form a team with its favorite bundle of free agents that is affordable, based on current prices. The result is a mechanism that is strategyproof-in-the-large, and more fair than random serial dictatorship.

\begin{algorithm}
\begin{algorithmic}[1]
\REQUIRE {$(N, \succ, \underline{k}, \overline{k})$}
\STATE Randomly assign approximately equal budgets $b_i$ to the agents, $b_i \in [1, \bar{b}]$, $\bar{b} < 1 + 1 / |N|$. 
\STATE Randomly order the agents.
\STATE Search for a price vector $p$ in price space $\mathcal{P} = [0, \bar{b}]^{|N'|}$ that approximately clears the (relaxed) market among the $N'$ remaining agents, given agent budgets $b$.
\STATE Take the next unmatched agent in the random order, and assign it to join its favorite bundle of other free agents that it can afford at the current prices, and that leaves a feasible subproblem---i.e., feasible $(N', \underline{k}, \overline{k})$. If the agent cannot afford any remaining bundle of legal size that leaves a feasible subproblem, the agent is assigned its favorite remaining bundle of legal size that leaves a feasible subproblem.
\STATE Repeat steps 3 and 4 until each agent is on a team.
\end{algorithmic}
\caption{A-CEEI-TF Algorithm Outline.}
\label{A:aceei}
\end{algorithm}

%\smallskip
%\noindent
%\textbf{A-CEEI-TF Outline.}
%Take as input a tuple $(N, \succ, \underline{k}, \overline{k})$. 
% that defines a feasible ASHG with non-negative values.
%\begin{enumerate}
%\item Randomly assign approximately equal budgets $b_i$ to the agents, $b_i \in [1, \bar{b}]$, $\bar{b} < 1 + 1 / |N|$. Randomly order the agents.
%\item Search for a price vector $p$ in price space $\mathcal{P} = [0, \bar{b}]^{|N'|}$ that approximately clears the (relaxed) market among the $N'$ remaining agents, given agent budgets $b$.
%\item Take the next unmatched agent in the random order, and assign it to join its favorite bundle of other free agents that it can afford at the current prices, and that leaves a feasible subproblem---i.e., feasible $(N', \underline{k}, \overline{k})$. If the agent cannot afford any remaining bundle of legal size that leaves a feasible subproblem, the agent is assigned its favorite remaining bundle of legal size that leaves a feasible subproblem.
%\item Repeat steps 2 and 3 until all agents have been assigned to teams.
%\end{enumerate}

%\smallskip
%\noindent
%\textbf{Price Update Function.}
A key part of A-CEEI-TF is a \emph{price update function}, which reflects the constraints of the team formation problem. 
%Following Budish \cite{budish2011combinatorial}, 
We use a t\^{a}tonnement-like price update function $f$ in an auxiliary price space $\mathcal{\tilde{P}} = [-1, 1 + \bar{b}]^{|N'|}$, where $N'$ is the set of agents remaining (unassigned) at an iteration of the algorithm, and $\bar{b}$ is the supremum of allowable agent budgets. 
%The function is {$f: \mathcal{\tilde{P}} \rightarrow \mathcal{\tilde{P}}$, which we will use to search iteratively for approximate relaxed market-clearing prices for the remaining agents $N'$. 
We make two requirements of a price update function, one ensuring that the iterative updates are well-defined, another to ensure that fixed points of the process are actual solutions.
\begin{definition}
A price update function $f$ is \emph{admissible} if (a) its fixed points correspond to (relaxed) market clearing, and (b) $\mathcal{\tilde{P}}$ is closed under $f$.
\end{definition}
%A viable price update function must have a fixed point only if it induces (relaxed) market clearing. The auxiliary price space, $\mathcal{\tilde{P}}$, must be closed under the price update function. 
%A price update function should also heuristically tend to improve market clearing each time it is applied. For example, if an agent is demanded by too many other agents, the price update function should tend to increase the agent's price.
We now define a candidate price update function, $f_{TF}$:
%\begin{align*}
%f_{TF}(\tilde{p}) = t(\tilde{p}) + \frac{z(t(\tilde{p}))}{|N'|} :  \tilde{p} \in \mathcal{\tilde{P}}, t(\tilde{p}) \in \mathcal{P}  \\
%z(p)_j = w(p_j) D_j - U_j \\
%w(p_j) = 1 + \epsilon - (\epsilon / \bar{b}) p_j
%\end{align*}
\begin{equation}
f_{TF}(\tilde{p})_j = t(\tilde{p})_j + \frac{(1 + \epsilon - (\epsilon / \bar{b}) t(\tilde{p})_j) D_j - U_j}{|N'|} 
\end{equation}
where $D_j$ is the number of agents that demand $j$ but whom $j$ does not demand, $U_j = 1$ if and only if no other agent demands $j$, and $0$ otherwise, $\bar{b}$ is the supremum of allowable agent budgets, and $t(\cdot)$ is a truncation function, which takes a price vector $\tilde{p}$ and truncates it to the $[0, \bar{b}]$ interval.
% is the \emph{incoming unrequited demand} of agent $j$, meaning the number of agents that demand $j$ but $j$ does not demand. 
%$U_j$ is the \emph{under-demand} of agent $j$, which is $1$ if and only if no other agent demands $j$, else $0$. 
%$\bar{b}$ is the supremum of allowable agent budgets. 
%$t(\cdot)$ is a truncation function, which takes a price vector $\tilde{p}$ which may include values $\tilde{p}_i$ outside $[0, \bar{b}]$ and truncates them to that interval.
\begin{proposition}
\label{T:priceUpdate}
$f_{TF}(\cdot)$ is admissible.
%The function $f_{TF}(\cdot)$ is a viable price update function for A-CEEI-TF, because: (a) for any fixed point $\tilde{p} \in \mathcal{\tilde{P}}$, $p = t(\tilde{p})$ must induce relaxed market clearing, and (b) the auxiliary price space $\mathcal{\tilde{P}}$ is closed under $f(\cdot)$.
\end{proposition}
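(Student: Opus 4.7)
The plan is to establish the two conditions separately, handling closure (b) first since it is purely algebraic.

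For (b) I would verify the coordinatewise bounds $f_{TF}(\tilde{p})_j \in [-1,\,1+\bar{b}]$. The lower bound is routine: $t(\tilde{p})_j \geq 0$, $D_j \geq 0$, and $U_j/|N'| \leq 1/|N'| \leq 1$. The upper bound requires exploiting the self-dampening built into the multiplier. Writing $x = t(\tilde{p})_j \in [0,\bar{b}]$, the factor $1 + \epsilon - (\epsilon/\bar{b})x$ decreases from $1+\epsilon$ at $x=0$ to $1$ at $x=\bar{b}$. Combined with $D_j \leq |N'|-1$ and $U_j \geq 0$,
\[
f_{TF}(\tilde{p})_j \;\leq\; x + \bigl(1+\epsilon - (\epsilon/\bar{b})x\bigr)\frac{|N'|-1}{|N'|} \;<\; x\bigl(1 - \epsilon/\bar{b}\bigr) + 1 + \epsilon,
\]
and the right-hand side is maximized on $[0,\bar{b}]$ at $x=\bar{b}$, yielding the strict bound $\bar{b}+1$.

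For (a) I would do a case analysis on each coordinate of a putative fixed point $\tilde{p}^*$. In the interior case $\tilde{p}^*_j \in [0,\bar{b}]$, truncation is the identity, and the fixed-point equation reduces to $(1+\epsilon - (\epsilon/\bar{b})\tilde{p}^*_j)\,D_j = U_j$. Since the multiplier is at least $1$, this rules out $D_j \geq 1$: any such $D_j$ would require $U_j \geq 1$, yet $D_j \geq 1$ means some agent demands $j$, forcing $U_j = 0$. Hence $D_j = U_j = 0$, i.e.\ exact clearing at $j$. If $\tilde{p}^*_j < 0$, truncation gives $t(\tilde{p}^*)_j = 0$ and the equation forces $U_j = 1$, $D_j = 0$, $\tilde{p}^*_j = -1/|N'|$: nobody demands $j$, so $j$ is legitimately unmatched at effective price $0$. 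If $\tilde{p}^*_j > \bar{b}$, the actual price $t(\tilde{p}^*)_j$ is pinned at the cap $\bar{b}$ and $D_j \geq 1$ is permitted; this is precisely the relaxation built into the definition (excess demand tolerated only when prices have saturated at the budget bound).

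The main obstacle I anticipate is articulating what ``relaxed'' market clearing means in the third case of (a): residual unrequited demand would be a violation of clearing in the strict CEEI sense, so one must appeal to the t\^{a}tonnement convention that a saturated price is not a violation but part of the relaxation. Once this is made explicit, the three cases partition the coordinates of every fixed point into either mutual matching ($D_j = 0$) or a saturated cap, and the induced actual-price vector $p^* = t(\tilde{p}^*)$ is the desired relaxed market-clearing vector.
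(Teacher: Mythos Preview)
Your closure argument for (b) is correct and in fact a little sharper than the paper's. Your case analysis for (a) is also the right strategy, and the interior case and the case $\tilde{p}^*_j < 0$ are handled correctly.

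The gap is your third case, $\tilde{p}^*_j > \bar{b}$. You assert that at the truncated price $\bar{b}$ one may have $D_j \geq 1$, and that this residual unrequited demand is ``precisely the relaxation built into the definition.'' Both claims are wrong. The paper's relaxed market clearing relaxes only the \emph{zero-price} side: an agent with price $0$ may be left unassigned (assigned the singleton team); there is no allowance for unrequited incoming demand at the price cap. Moreover, the case $\tilde{p}^*_j > \bar{b}$ is actually \emph{impossible} at a fixed point: when the effective price is $\bar{b}$, no agent's budget suffices to include $j$ in a demanded bundle, so $D_j = 0$ and $U_j = 1$, giving $z_j = -1$. But a fixed point with $\tilde{p}^*_j > \bar{b}$ would require $f_{TF}(\tilde{p}^*)_j = \bar{b} + z_j/|N'| = \tilde{p}^*_j > \bar{b}$, i.e.\ $z_j > 0$, a contradiction. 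The paper disposes of this case by exactly this unaffordability argument, \emph{ruling the case out} rather than absorbing it into the relaxation. Once corrected, the surviving cases are $D_j = U_j = 0$ in the interior and $D_j = 0$, $U_j = 1$ at effective price $0$, which together constitute relaxed market clearing as the paper defines it.
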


While admissibility of $f_{TF}(\cdot)$ alone does not guarantee convergence of the iterative process, it does guarantee that if convergence happens, we have a solution.
The following proposition characterizes some of the properties such solutions possess.
\begin{proposition}
\label{T:aceeiTFic}
A-CEEI-TF is strategyproof-in-the-large. In addition, if A-CEEI-TF yields exact market clearing and induces the same allocation at each stage of price search, it yields envy bounded by a single teammate.
\end{proposition}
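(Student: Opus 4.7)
The plan is to handle the two claims separately, following the template of Budish's A-CEEI analysis, but adapting to the two-sided team formation setting.

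For strategyproofness-in-the-large, I would argue as follows. In the continuum-replica market, each individual agent has zero measure, so no unilateral deviation in the preference report can shift the aggregate demand used by the price-update dynamics in Step~3. Consequently, when the agent is pulled from the random order in Step~4, the price vector $p$ and the budget $b_i$ it faces are exogenous from its point of view, and the algorithm hands it its favorite bundle of remaining agents among those affordable under the \emph{reported} preferences (subject to the feasibility constraint on the subproblem). Any misreport either leaves this chosen bundle unchanged or substitutes a bundle weakly worse under the true preferences, so truthful reporting is weakly dominant in the limit.

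For the envy-bounding claim, the hypothesis of exact market clearing with the \emph{same} allocation at each iteration of the price search lets me treat the algorithm as a CEEI-style allocation at a single prevailing price vector $p^*$ in which each $i$ receives its favorite legal-size bundle $B_i$ satisfying $\sum_{k\in B_i} p^*_k \le b_i$. Fix $i,j$ and suppose $i$ strictly prefers $B_j$ to $B_i$. Then $B_j$ must be unaffordable to $i$, i.e., $\sum_{k\in B_j} p^*_k > b_i$, since otherwise $i$'s optimization would have preferred $B_j$. Let $k^*\in B_j$ be the highest-priced teammate. By the averaging bound, $p^*_{k^*}\ge \frac{1}{|B_j|}\sum_k p^*_k > \frac{b_i}{|B_j|} \ge \frac{1}{|N|}$, while the budget construction $\bar b<1+1/|N|$ forces $b_j-b_i<1/|N|$. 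Combining, $p^*_{k^*} > b_j-b_i$, so $\sum_{k\in B_j\setminus\{k^*\}}p^*_k \le b_j-p^*_{k^*} < b_i$, i.e., dropping $k^*$ renders the remainder strictly affordable to $i$.

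To conclude, I would use additive separability and non-negativity of values: the reduced bundle $B_j\setminus\{k^*\}$ can be padded by any sufficiently cheap available agent to yield a legal-size bundle that $i$ could have chosen at prices $p^*$, so by $i$'s optimization $u_i(B_i)\ge u_i(B_j\setminus\{k^*\})$, which is exactly envy bounded by a single teammate. The main obstacle is this last feasibility-and-padding step: making rigorous that at the moment $i$ chose, some available agent could be appended to $B_j\setminus\{k^*\}$ without violating the cost or subproblem-feasibility constraint. The exact market clearing hypothesis and the approximately-equal-budgets slack $\bar b-1<1/|N|$ together provide exactly the room needed, but the bookkeeping to exhibit a concrete padding candidate while keeping the envy gap attributable to the single teammate $k^*$ is the delicate part of the argument.
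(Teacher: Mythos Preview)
Your strategyproofness-in-the-large sketch coincides with the paper's: in the continuum replica, reports cannot move prices, so the only leverage is the Step~4 selection, which is maximized by truthful reporting.

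For the envy claim your route is genuinely different from the paper's. You work constructively: remove the single highest-priced teammate $k^*$, show via an averaging bound that $B_j\setminus\{k^*\}$ becomes affordable to $i$, and then try to invoke $i$'s optimization to get $u_i(B_i)\ge u_i(B_j\setminus\{k^*\})$. The paper instead argues by contradiction over \emph{all} single removals simultaneously. Assuming $i$'s envy for $j$ is not bounded by any single teammate, $i$ strictly prefers $B_j\setminus\{j_r\}$ to $B_i$ for every $r=1,\ldots,k'$, so each such bundle must be unaffordable: $p\cdot(x_j\setminus\{j_r\})>b_i$. Summing these $k'$ inequalities collapses to $(k'-1)\,p\cdot x_j>k'\,b_i$; since $j$ affords its own team, $p\cdot x_j\le b_j$, giving $b_j/b_i>k'/(k'-1)$. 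Because $k'<\bar k$ and $\bar b\le 1+1/\bar k\le \bar k/(\bar k-1)<k'/(k'-1)$, this forces $b_j/b_i>\bar b$, contradicting $b_i,b_j\in[1,\bar b]$.

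The gap in your proposal is exactly where you flag it: the padding step. To invoke $i$'s optimization you need $B_j\setminus\{k^*\}$, or a padded superset of it, to be a legal-size bundle that was actually available to $i$, but nothing in the exact-clearing hypothesis or the $1/|N|$ budget slack manufactures a sufficiently cheap available agent to append; prices need not be near zero anywhere, and subproblem feasibility is a further constraint you do not control. The paper's summing-to-contradiction trick is precisely what lets the argument close without ever exhibiting a feasible alternative bundle for $i$: it only extracts $k'$ unaffordability inequalities from the unbounded-envy assumption and adds them. That is the idea your argument is missing.
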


\subsubsection{Proof}
We sketch a proof of strategyproofness-in-the-large.

If a team formation problem is modified such that each agent is replaced with a measure-one set of copies of itself, each copy being measure zero, we arrive at what is called a continuum economy. If we run A-CEEI-TF in the continuum economy, any individual agent, being zero-measure, has no influence on the approximate equilibrium price vector arrived at by update function $f_{TF}(\cdot)$, at any iteration of the A-CEEI-TF mechanism. Therefore, the only effect the agent can have on the outcome is that, if the agent is randomly selected to choose its favorite affordable team of available agents that leaves a feasible subproblem, the agent's reported preferences determine which team the agent is assigned. Thus, it is a dominant strategy for the agent to report its true preferences, so that in this case the agent will be assigned its most-preferred allowable team. \qed

\leaveout{
Now we prove that A-CEEI-TF is approximately envy-free under the given condition.

Consider a team formation problem where A-CEEI-TF yields exact market clearing with the same partition at each stage. If all agent budgets are equal, the resulting allocation is envy-free, because each agent's team is the same team induced by prices in the initial round of price search. In that initial round, each agent was able to afford the team of any other agent yet chose its own team instead. Because the final partition into teams is the same as that induced by the initial prices, no agent must envy any other in the final partition.

More generally, consider the same setting without equal budgets. That is, consider a team formation problem with non-negative values and team size constraints, where A-CEEI-TF yields exact market clearing with the same partition induced at each stage. Recall that all agents' budgets are in the range $[1, \bar{b}]$, where $\bar{b} = 1 + (1 / |N|)$. Because $1 \leq \bar{k} \leq |N|$, this implies that $\bar{b} \leq 1 + (1 / \bar{k})$.

No agent $i$ can envy another agent $j$ in its team, because $j$'s team (excluding the costs of $i$ and $j$) costs weakly less than the team of $i$ (excluding the cost of $i$), as all agent prices are non-negative, so $i$ can afford the team of other agents that $j$ receives (the team of agent $i$, excluding agents $i$ and $j$).

Assume for a contradiction that $i$ has envy for agent $j$ that is not bounded by a single teammate. Note that $i$ and $j$ must be assigned to distinct teams. Let $k'$ be the number of other agents in $j$'s team, $k' < \overline{k}$. Let $j_1$ be the first other agent in $j$'s team, and so on through $j_{k'}$.

Let $p$ be the exact market clearing prices from the first stage of A-CEEI-TF. Let $x$ be the partition that is induced by the mechanism, which we have assumed to be consistent across all stages of the price search. $x_i$ is then a vector in $\{0, 1\}^{|N|}$ that has a $1$ for each agent on the same team as agent $i$. For convenience in calculating the cost of the team for agent $i$, we let $x_{ii} = 0$ for all $i$, because an agent is not required to pay the cost of teaming with itself. Let $(x_i \setminus \{j\})$ be a vector equivalent to $x_i$, but with $x_{ij}$ set to $0$. The following chain of logic holds for arbitrary $i, j \neq i \in N$.

$p \cdot (x_j \setminus \{j_r\}) > b_i: \forall r \in \{1, \ldots, k'\}$. If agent $i$'s envy for agent $j$ is not bounded by a single teammate, agent $i$ cannot afford any subset of $j$'s team created by removing one agent (and excluding agent $j$).

$(k' - 1) p \cdot x_j > k' b_i$. Add the L.H.S. and R.H.S. of the $k'$ inequalities above, where the subtracted teammates from $x_j$ cancel to produce one fewer complete $x_j$.

$b_j \geq p \cdot x_j$, because agent $j$ can afford its own team.

$(k' - 1) b_j > k' b_i$. Replacing $p \cdot x_j$ with $b_j$ in the inequality above. Rearranging, we have $b_j / b_i > k' / (k' - 1)$.

Recall that $\bar{b} \leq 1 + (1 / \bar{k})$, and $\bar{k} \geq 1$, meaning that $\bar{b} \leq 1 + (1 / (\bar{k} - 1))$. This implies that $\bar{k} / (\bar{k} - 1) \geq \bar{b}$.

$\bar{k} > k'$, which implies $\bar{k} k' - \bar{k} < \bar{k} k' - k'$. So $\bar{k} (k' - 1) < k' (\bar{k} - 1)$. This means $\bar{k} / (\bar{k} - 1) < k' / (k' - 1)$.

$b_j / b_i > k' / (k' - 1) > \bar{k} / (\bar{k} - 1) \geq \bar{b}$.

$b_j / b_i > \bar{b}$. 

This contradicts the claim that all budgets are $\in [1, \bar{b}]$. Therefore, envy must be bounded by a single teammate in a team formation problem with non-negative values and team size constraints, where A-CEEI-TF yields exact market clearing with the same partition at each stage. \qed

}

%\begin{proposition}
%\label{T:aceeiTFenvy}
%If A-CEEI-TF yields exact market clearing and induces the same allocation at each stage of price search, it yields envy bounded by a single good.
%\end{proposition}

%Recall that exact market clearing prices may not exist for ASHGs with team size constraints. In addition, recall that an allocation that bounds envy by a single good may not exist for ASHGs with team size constraints. A-CEEI-TF does yield envy bounded by a single good, however, in cases where the same, exact-market-clearing partition is induced by the prices at every stage of the mechanism's operation.

%\smallskip
%\noindent
%\textbf{Proposition 5.} \emph{A-CEEI-TF is strategyproof-in-the-large.}

%\smallskip
%\noindent
%\emph{Proof sketch.} 

%\paragraph{Implementation}

\section{Experiments}

Although RSD and A-CEEI-TF possess desirable theoretical properties, these results are loose, and the only approximate fairness guarantee, shown for A-CEEI-TF, requires strong assumptions on the environment.
We now assess all of the proposed mechanisms empirically through simulations based both on randomly generated classes of preferences, as well as real-world data.
Our empirical results turn out to be both one-sided (if one is interested in achieving all three desired properties) and surprising: OPOP, a mechanism with no provable theoretical guarantees, tends to outperform others in fairness, and to perform nearly as well as the best other mechanism in truthfulness and social welfare.

\smallskip
\noindent\textbf{Data Sets: }
We use both randomly generated data and data from prior studies on preferences of human subjects over each other: 
%Data sets were selected to be tractable in size for our MIP-based algorithms, such as A-CEEI-TF, which run too slowly on large data sets for many repeated runs to be feasible.
%We used the following four data sets:
\begin{itemize}
\item \textbf{Random-similar (R-sim)~\cite{othman2010finding}: } 
%In this data set class, preferences were randomly generated according to a distribution proposed by~\citeauthor{othman2010finding}~\shortcite{othman2010finding}. 
Each agent $i$, $i \in \{1, 2, \ldots, |N|\}$, is assigned the public value $i$. A private error term is added to the public value of $i$ to derive the value of $i$ to each other agent $j$, drawn independently from a normal distribution with zero mean and standard deviation $|N| / 5$. The private error term is redrawn until the sum of the private error term and public value is non-negative. Then the value of $i$ to $j$ is the sum of $i$ and the private error term. 
%We will discuss results for $|N| = 20$, $\underline{k} = \overline{k} = 5$, because we have performed the most thorough analysis on this small version of Random-similar due to computational constraints.
\item \textbf{Random-scattered (R-sca): } In this data set class, the value of a player $i$ is generated independently by each other player $j$. To determine the value of other players to player $j$, a total value of $100$ is divided at random among the other players as follows. Uniformly random numbers $\in [0, 100]$ are taken, to divide the region into $|N| - 1$ regions. The random draws for agent $j$ are sorted, producing $|N| - 1$ values for the other agents, as the differences between consecutive draws in sorted order. 
%We will discuss results for $|N| = 20$, $\underline{k} = \overline{k} = 5$, because we have performed the most thorough analysis on this small version of Random-scattered due to computational constraints.
\item \textbf{Newfrat: } This data set comes from a widely cited study by Newcomb, in which $17$ students at the University of Michigan in 1956 ranked each other in terms of friendship ties. We use the data set from the final week, NEWC15 \cite{newcomb61fraternity}. We let $\underline{k} = 4$, $\overline{k} = 5$.
\item \textbf{Freeman: } The data are from a study of email messages sent among $32$ researchers in 1978. We use the third matrix of values from the study. The data show how many emails each researcher sent to each other during the study, which we use as a proxy for the strength of directed social links \cite{freeman79eies}. We let $\underline{k} = 5$, $\overline{k} = 6$.
\end{itemize}
For the randomly generated data sets, we set $|N| = 20$ and $\underline{k} = \overline{k} = 5$, and our results are averaged over $20$ generated preference rankings for all players.
%we generated $20$ copies of each, with every copy having its own randomly generated player rank, and every mechanism was then run on the same $20$ data set-player order pairs, as a means of variance reduction.

%\section{Experimental Results}
%\subsection{Description of Data Sets}
The four classes of data set we analyze differ most saliently in their number of agents, $|N|$, and in the degree of similarity among the player preferences. For example, Random-similar agents largely agree on which other agents are most valuable, while Random-scattered agents have little agreement. Differences in degree of preference similarity lead to marked differences in the performance outcomes of the various mechanisms.

To measure agent preference similarity in a data set, we let $\mathcal{C}$ equal the mean cosine similarity among all pairs of distinct agents in the data set. Each agent assigns itself a value of $0$ or undefined, so we take the cosine similarity between agents $i$ and $j$ only over their values for agents in $N \setminus \{i, j\}$:
\begin{align*}
u_{i-ij} = u_i \setminus \{u_{ii}, u_{ij}\} \\
\mathcal{C} = \frac{ \sum_{i=1}^{|N|} \sum_{j=i+1}^{|N|} \frac{u_{i-ij} \cdot u_{j-ij}}{\|u_{i-ij}\| \|u_{j-ij}\|} }{ (|N|^2 - |N|) / 2 } \\
\end{align*}

In Table~\ref{T:randomCorr}, we present the mean cosine similarity for each data set we discuss in this paper. Higher cosine similarities indicate greater agreement among agents about the relative values of other agents. We also show the number of agents in each data set.
\begin{table}[h!]
\centering
\begin{tabular}{ l | c c c c}
     & $\mathcal{C}$ & $|N|$ & $\underline{k}$ & $\overline{k}$ \\
  \hline
  Random-similar 20 & 0.914 & 20 &  5 & 5 \\
  Random-scattered 20 & 0.499 & 20 & 5 & 5 \\
  Newfrat & 0.877 & 17 & 4 & 5 \\
  Freeman & 0.551 & 32 & 5 & 6 \\
\end{tabular}
\caption{Mean cosine similarity over all pairs of distinct agents; number of agents; minimum team size; and maximum team size. For random data set classes, $\mathcal{C}$ as shown is the mean over $20$ randomly generated instances of the class.}
\label{T:randomCorr}
\end{table}

\smallskip
\noindent\textbf{Empirical Analysis of Incentive Compatibility: }
To study the incentive compatibility of the mechanisms, we used a protocol similar to that used by Vorobeychik and Engel for estimating the regret of a strategy profile \cite{vorobeychik2011average}. We ran each mechanism on $8$ versions of each data set, with different random orders over the players, which we held in common across data sets. 
%For each data set and player order, we randomly generated $25$ deviations from truthful reporting per agent, which the agent would report while all other agents reported their true preferences. 
%For example, for a data set with $30$ agents, the mechanism would be run $30 \times 25 = 750$ times per player order, such that each agent played $25$ random deviations from its true preferences against truthful reporting from the other agents. 
%For each data set and player order, we also ran each mechanism with all agents reporting their true preferences. We tracked the maximum gain in utility from misreporting preferences over all agents, for each data set. 
%This provides a lower bound for the regret of truthful reporting, which should approach the true value in the limit as more data is collected.
%To generate false reports of agent preferences, we applied random modifications to an agent's true values for the other agents. 
We generate deviations from truthful reporting for agent $j$ one at a time until $25$ unique deviations have been produced. 
%, to ensure that we do not revisit the same deviation. 
To produce a deviation from an agent's truthful values for other agents, we first randomly select a number of pairs of values to swap according to a Poisson distribution with $\lambda = 1$, with $1$ added. For each pair of values to swap, we first select the rank of one of them, with lower (better) ranks more likely. 

The results are shown in Table~\ref{T:regretMean}.\footnote{We only report results for the two random data sets and Newfrat, as it was not feasible to rigorously analyze regret for the far larger Freeman data set.
(However, the Freeman data set is similar to Random-scattered; see supplemental material for details.)}

RSD is not shown, since it is provably strategyproof, but, remarkably, A-CEEI-TF empirically produces higher (worse) regret of truthful reporting than HBS or OPOP, even though A-CEEI-TF is strategyproof-in-the-large, and the others are not.
Both HBS and OPOP appear to offer players only small incentives to lie, with HBS slightly better.

\begin{table}[h!]
\centering
\begin{tabular}{ l | c c c}
     & R-sim. & R-sca. & Newfrat\\
  \hline
  HBS & $0.02\pm 0.02$ & $0.04\pm 0.02$ & $0.03\pm 0.02$\\
  OPOP & $0.07\pm 0.02$ & $0.10 \pm 0.05$ & $0.06 \pm 0.02$\\
  A-CEEI-TF & $0.19 \pm 0.04$ & $0.29 \pm 0.07$ & $0.19 \pm 0.03$\\
 Max-welfare & $0.20 \pm 0.03$ & $0.29 \pm 0.07$ & $0.22 \pm 0.03$\\
\end{tabular}
\caption{Mean maximum observed regret of truthful reporting, with $95\%$ confidence intervals.
}
\label{T:regretMean}
\end{table}

\leaveout{
%\smallskip
%\noindent
%\textbf{Table 9.} 

\begin{table}[h!]
\centering
\begin{tabular}{ l | c }
     & Newfrat \\
  \hline
  HBS & $0.030\pm 0.018$ \\ %[0.013,0.048] 
  OPOP & $0.057 \pm 0.019$ \\ %[0.038,0.076] 
  A-CEEI-TF & $0.188 \pm 0.032$ \\ %[0.156,0.220] 
  Max-welfare & $0.216 \pm 0.026$
\end{tabular}
\caption{Mean across runs of maximum observed regret of truthful reporting, meaning the maximum fraction of total utility gained from misreporting preferences. $0$ would be best, $1$ worst. Results are over 8 runs with 25 deviations per agent per run. Data are from the Newfrat data set. $95\%$ confidence intervals are shown, based on the t-distribution.}
\label{T:regretMeanNewfrat}
\end{table}
}

\smallskip
\noindent\textbf{Social Welfare: }
To facilitate comparison, we normalize the total utility of all teammates for each agent to $1$, so that social welfare (already normalized for the number of players) falls in the $[0,1]$ interval.
For comparison, we also include optimal social welfare for both Random data sets, as well as Newfrat.\footnote{It was infeasible to compute this for the Freeman data set due to its size.}

The only related theoretical result is that RSD is ex post Pareto optimal; the other three mechanisms do not even possess this guarantee.
This makes our results, shown in Tables~\ref{T:sw} and~\ref{T:nfsw}, remarkable: on Random-similar and Newfrat data sets (both with preferences relatively similar across players), there is little difference in welfare generated by the different mechanisms, but on Random-scattered and Freeman data sets, OPOP statistically significantly outperforms the others.

\begin{table}[h!]
\centering
\begin{tabular}{ l | c c }
     & R-sim. & R-sca. \\
  \hline
  RSD & $0.22\pm 0.004$ & $0.25 \pm 0.01$ \\
  A-CEEI-TF & $0.22 \pm 0.004$ & $0.25\pm 0.01$\\
  HBS & $0.22 \pm 0.004$ & $0.25 \pm 0.02$\\
  OPOP & $0.22 \pm 0.003$ & $0.27 \pm 0.01$\\
  Max-welfare & $0.25 \pm 0.001$ & $0.35 \pm 0.01$\\
\end{tabular}
\caption{Mean social welfare for the two Random data sets, with $95\%$ confidence intervals. 
}
\label{T:sw}
\end{table}

\begin{table}[h!]
\centering
\begin{tabular}{ l | c c }
     & Newfrat & Freeman \\
  \hline
  RSD & $0.23 \pm 0.01$ & $0.20 \pm 0.01$\\
  A-CEEI-TF & $0.23 \pm 0.01$ & $0.19 \pm 0.01$\\
  HBS & $0.22 \pm 0.05$ & $0.20 \pm 0.01$\\
  OPOP & $0.22 \pm 0.05$ & $0.24 \pm 0.02$\\
  Max-welfare & $0.27 \pm 0.00$ & -\\
\end{tabular}
\caption{Mean social welfare for the Newfrat and Freeman data sets, with $95\%$ confidence intervals.
}
\label{T:nfsw}
\end{table}

\smallskip
\noindent\textbf{Fairness: }
Fairness of an allocation (in our case, a partition of players) can be conceptually described as the relative utility of best- and worst-off agents. Formally, we measure fairness in the experiments as the fraction of agents whose envy is bounded by a single teammate (as defined above).

Our fairness results, shown in Tables~\ref{T:envy} and~\ref{T:envynf} are unambiguous: RSD is always worse, typically by a significant margin, then the other mechanisms.
This is intuitive, and is precisely the reason why alternatives to RSD are commonly considered.
What is far more surprising is that A-CEEI-TF, in spite of some theoretical promise on the fairness front, and in spite of being explicitly designed for fairness, is in all but one case the \emph{second worst}.
While HBS and OPOP are comparable on the high-similarity data sets (Random-similar and Newfrat), it \emph{dominates} all others on the dissimilar data sets (Random-scattered and Freeman).

\begin{table}[h!]
\centering
\begin{tabular}{ l | c c }
     & R-sim. & R-sca. \\
  \hline
  RSD & $0.43 \pm 0.03$ & $0.59 \pm 0.05$ \\
  A-CEEI-TF & $0.66 \pm 0.05$ & $0.62 \pm 0.05$ \\
  HBS & $0.71 \pm 0.04$ & $0.61 \pm 0.06$ \\
  OPOP & $0.70 \pm 0.06$ & $0.79 \pm 0.04$ \\
\end{tabular}
\caption{Mean fraction of agents with envy bounded by a single teammate for the two Random data sets, with $95\%$ confidence intervals. 
}
\label{T:envy}
\end{table}

\begin{table}[h!]
\centering
\begin{tabular}{ l | c c }
     & Newfrat & Freeman \\
  \hline
  RSD & $0.36 \pm 0.02$ & $0.43 \pm 0.05$ \\
  A-CEEI-TF & $0.57 \pm 0.03$ & $0.55 \pm 0.05$\\
  HBS & $0.67 \pm 0.05$ & $0.64 \pm 0.04$ \\
  OPOP & $0.68 \pm 0.07$ & $0.78 \pm 0.05$\\
\end{tabular}
\caption{Mean fraction of agents with envy bounded by a single teammate for the Newfrat and Freeman data sets, with $95\%$ confidence intervals.
}
\label{T:envynf}
\end{table}

Next, we consider informal perspectives on the fairness of the various mechanisms.
For mechanisms that use a random serial order over players, we can study typical outcomes for a player given its serial index. If players with lower (better) serial indexes receive drastically better outcomes than agents with higher (worse) indexes, such a mechanism is not very fair. In Figure~\ref{F:fairness} (left), we plot a smoothed version of the mean fraction of total utility achieved by agents at each random serial index from $0$ to $19$, for the mechanisms RSD, HBS draft, OPOP draft, and A-CEEI-TF. Results are based on $20$ instances of Random-scattered preferences, held in common across the mechanisms, with different serial orders over the players.
From this Figure, it is apparent that random serial dictatorship gives far better outcomes to the best-ranked agents than to any others. 
Surprisingly, A-CEEI-TF follows a similar pattern in this case, although we did observe that for some other game types (not shown), A-CEEI-TF's curve gives better outcomes to low-ranked agents than RSD. 
In the HBS draft, a ``shelf'' of high utility for the several best-ranked agents is typical, as all of the team captains receive similarly high utility, with a steep drop-off in utility for non-captain agents. 
In the OPOP draft, in contrast to all others, the utility curve is far more flat across random serial indexes: even the agents with high (bad) serial indexes achieve moderately good outcomes for themselves. 

\noindent
\begin{figure}[h!]
\centering
\begin{tabular}{cc}
\includegraphics[width=1.5in]{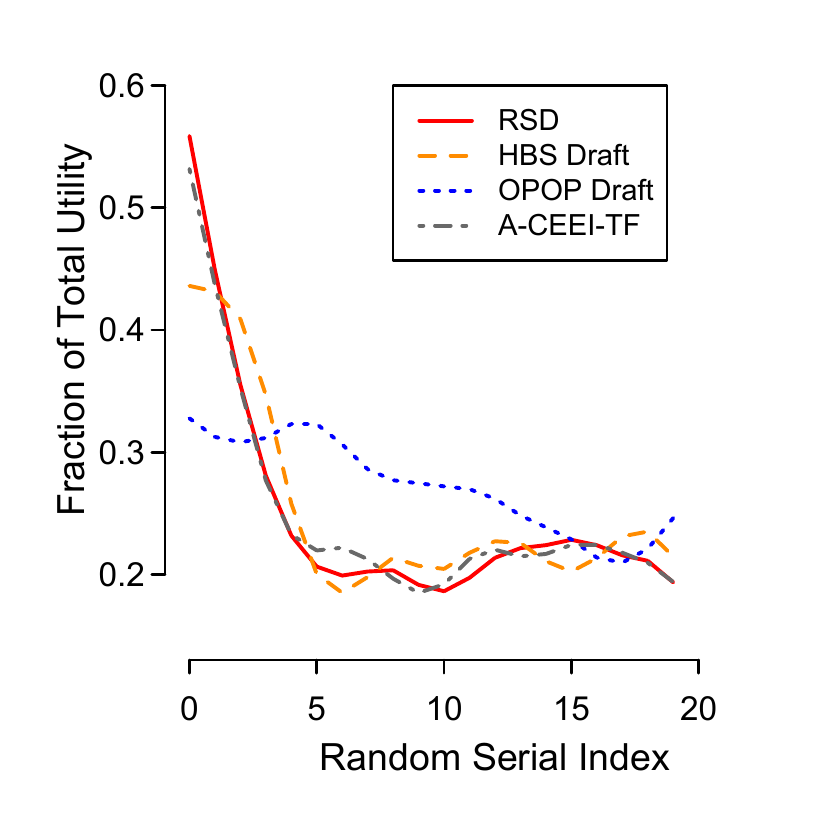} & 
\includegraphics[width=1.5in]{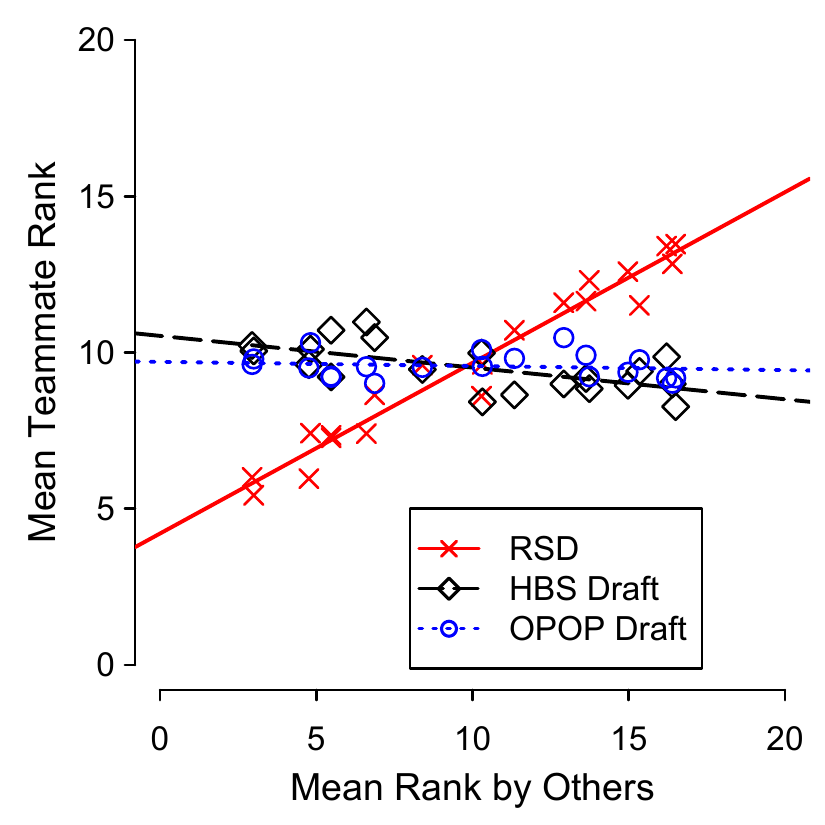}
\end{tabular}
\caption{Left: Mean fraction of total utility earned versus the random serial index of the agent. Fraction of total utility is between $0$ (worst) and $1$ (best). A cubic smoothing spline is applied.  Right: Mean rank of an agent's teammates, versus mean rank of the agent by other agents. Possible ranks range from $1$ (best) to $20$ (worst). Each point represents a single agent's mean outcome. Best-fit lines use ordinary least squares.}
\label{F:fairness}
\end{figure}

Some mechanisms for team formation tend to give better outcomes to an agent that is ``popular,'' having a high mean value to the other agents. For example, random serial dictatorship biases outcomes in favor of popular players, because even if a popular player is not a team captain, it is likely that this player will be selected by some team captain along with other desirable players. An unpopular player, however, will likely be left until near the final iteration of RSD, to be selected along with other unpopular players. Therefore, we might expect RSD to yield better outcomes to popular players, especially when agents' preferences are highly similar.
To quantify this intuition, we plot in Figure~\ref{F:fairness} (right) the mean rank of an agent's teammates according to the agent's preferences, versus the agent's mean rank assigned by the other agents. Each point in the scatter plot represents a single agent's mean outcomes across $20$ instances of Random-similar preferences, held in common across the mechanisms. We find best-fit lines via OLS regression, for each of RSD, HBS draft, and OPOP draft. The results indicate that, as expected, RSD offers better outcomes to popular agents than to unpopular ones, with a distinctly positive trend line. The HBS draft and OPOP draft appear less biased for or against popular agents, with OPOP showing slightly lower correlation than HBS between an agent's popularity and the mean value of its assigned team.

\section{Conclusion}

We considered team formation as a mechanism design problem, in which the mechanism elicits agents' preferences over potential teammates in order to partition the agents into teams. The teams produced should have high social welfare and fairness, in the sense that few agents should prefer to switch teams with others. We proposed two novel mechanisms for this problem: a version of approximate competitive equilibrium for equal incomes (A-CEEI-TF), and the one-player-one-pick draft (OPOP). We showed theoretically that A-CEEI-TF is strategyproof-in-the-large and approximates envy-freeness. OPOP lacks these theoretical guarantees but empirically outperformed A-CEEI-TF in truthfulness and fairness, as well as in social welfare for data sets with sufficiently dissimilar agent preferences. In addition, OPOP surpassed other mechanisms tested, including random serial dictatorship and the HBS draft, in social welfare and fairness. The HBS draft, however, produced slightly better truthfulness that the OPOP draft. Given the relative simplicity of implementing OPOP, this mechanism emerges as a strong candidate for team formation settings.

%\section{Definitions and Notation}

%\subsection{Hedonic Games}

%\subsection{Fairness Measures}

%\subsection{Incentive Compatibility}

\bibliographystyle{aaai}
\small
\bibliography{mdtfPaper}

\begin{thebibliography}{}

\bibitem[\protect\citeauthoryear{Abdulkadiroglu and
  S{\"o}nmez}{2003}]{abdulkadiroglu2003school}
Abdulkadiroglu, A., and S{\"o}nmez, T.
\newblock 2003.
\newblock School choice: A mechanism design approach.
\newblock {\em The American Economic Review} 93(3):729--747.

\bibitem[\protect\citeauthoryear{Alcalde and
  Revilla}{2004}]{alcalde2004researching}
Alcalde, J., and Revilla, P.
\newblock 2004.
\newblock Researching with whom? {Stability} and manipulation.
\newblock {\em Journal of Mathematical Economics} 40(8):869--887.

\bibitem[\protect\citeauthoryear{Aumann and
  Dreze}{1974}]{aumann1974cooperative}
Aumann, R.~J., and Dreze, J.~H.
\newblock 1974.
\newblock Cooperative games with coalition structures.
\newblock {\em International Journal of Game Theory} 3(4):217--237.

\bibitem[\protect\citeauthoryear{Aziz, Brandt, and
  Seedig}{2011}]{aziz2011stable}
Aziz, H.; Brandt, F.; and Seedig, H.~G.
\newblock 2011.
\newblock Stable partitions in additively separable hedonic games.
\newblock In {\em The 10th International Conference on Autonomous Agents and
  Multiagent Systems},  183--190.

\bibitem[\protect\citeauthoryear{Ballester}{2004}]{ballester2004np}
Ballester, C.
\newblock 2004.
\newblock {NP}-completeness in hedonic games.
\newblock {\em Games and Economic Behavior} 49(1):1--30.

\bibitem[\protect\citeauthoryear{Banerjee, Konishi, and
  S{\"o}nmez}{2001}]{banerjee2001core}
Banerjee, S.; Konishi, H.; and S{\"o}nmez, T.
\newblock 2001.
\newblock Core in a simple coalition formation game.
\newblock {\em Social Choice and Welfare} 18(1):135--153.

\bibitem[\protect\citeauthoryear{Bogomolnaia and
  Jackson}{2002}]{bogomolnaia2002stability}
Bogomolnaia, A., and Jackson, M.~O.
\newblock 2002.
\newblock The stability of hedonic coalition structures.
\newblock {\em Games and Economic Behavior} 38(2):201--230.

\bibitem[\protect\citeauthoryear{Budish and Cantillon}{2012}]{budish2010multi}
Budish, E.~B., and Cantillon, E.
\newblock 2012.
\newblock The multi-unit assignment problem: {Theory} and evidence from course
  allocation at {Harvard}.
\newblock {\em American Economic Review} 102(5):2237--2271.

\bibitem[\protect\citeauthoryear{Budish}{2011}]{budish2011combinatorial}
Budish, E.
\newblock 2011.
\newblock The combinatorial assignment problem: Approximate competitive
  equilibrium from equal incomes.
\newblock {\em Journal of Political Economy} 119(6):1061--1103.

\bibitem[\protect\citeauthoryear{Budish}{2012}]{Budish12}
Budish, E.
\newblock 2012.
\newblock Matching ``versus'' mechanism design.
\newblock {\em ACM SIGECOM Exchanges} 11(2):4--15.

\bibitem[\protect\citeauthoryear{Cechlarova and
  Romero-Medina}{2001}]{cechlarova2001stability}
Cechlarova, K., and Romero-Medina, A.
\newblock 2001.
\newblock Stability in coalition formation games.
\newblock {\em International Journal of Game Theory} 29(4):487--494.

\bibitem[\protect\citeauthoryear{Cromme and Diener}{1991}]{cromme1991fixed}
Cromme, L.~J., and Diener, I.
\newblock 1991.
\newblock Fixed point theorems for discontinuous mapping.
\newblock {\em Mathematical Programming} 51(1-3):257--267.

\bibitem[\protect\citeauthoryear{Freeman and Freeman}{1979}]{freeman79eies}
Freeman, S.~C., and Freeman, L.~C.
\newblock 1979.
\newblock {FreemansEIES}: Weighted static one-mode network (messages).
\newblock http://toreopsahl.com/datasets/\#FreemansEIES.

\bibitem[\protect\citeauthoryear{Li \bgroup et al\mbox.\egroup }{2004}]{Li04}
Li, C.; Chawla, S.; Rajan, U.; and Sycara, K.
\newblock 2004.
\newblock Mechanism design for coalition formation and cost sharing in
  group-buying markets.
\newblock {\em Electronic Commerce Research and Applications} 3:341--354.

\bibitem[\protect\citeauthoryear{Marcolino, Jiang, and
  Tambe}{2013}]{Marcolino13}
Marcolino, L.~S.; Jiang, A.~X.; and Tambe, M.
\newblock 2013.
\newblock Multi-agent team formation: {Diversity} beats strength?
\newblock In {\em International Joint Conference on Artificial Intelligence},
  279--285.

\bibitem[\protect\citeauthoryear{Newcomb}{1958}]{newcomb61fraternity}
Newcomb, N.
\newblock 1958.
\newblock {Newcomb}, {Nordlie}: {Fraternity}.
\newblock http://moreno.ss.uci.edu/data.html.

\bibitem[\protect\citeauthoryear{Othman, Sandholm, and
  Budish}{2010}]{othman2010finding}
Othman, A.; Sandholm, T.; and Budish, E.
\newblock 2010.
\newblock Finding approximate competitive equilibria: {Efficient} and fair
  course allocation.
\newblock In {\em Proceedings of the 9th International Conference on Autonomous
  Agents and Multiagent Systems},  873--880.

\bibitem[\protect\citeauthoryear{Procaccia and Wang}{2014}]{procaccia2014fair}
Procaccia, A.~D., and Wang, J.
\newblock 2014.
\newblock Fair enough: {Guaranteeing} approximate maximin shares.
\newblock In {\em Proceedings of the fifteenth ACM conference on Economics and
  computation},  675--692.
\newblock ACM.

\bibitem[\protect\citeauthoryear{Roth and Peranson}{1999}]{roth1999redesign}
Roth, A.~E., and Peranson, E.
\newblock 1999.
\newblock The redesign of the matching market for {American} physicians: Some
  engineering aspects of economic design.
\newblock {\em American Economic Review} 89(4):748--780.

\bibitem[\protect\citeauthoryear{Sung and
  Dimitrov}{2010}]{sung2010computational}
Sung, S.-C., and Dimitrov, D.
\newblock 2010.
\newblock Computational complexity in additive hedonic games.
\newblock {\em European Journal of Operational Research} 203(3):635--639.

\bibitem[\protect\citeauthoryear{Varian}{1974}]{Varian74}
Varian, H.
\newblock 1974.
\newblock Equity, envy and efficiency.
\newblock {\em Journal of Economic Theory} 29(2):217--244.

\bibitem[\protect\citeauthoryear{Vorobeychik and
  Engel}{2011}]{vorobeychik2011average}
Vorobeychik, Y., and Engel, Y.
\newblock 2011.
\newblock Average-case analysis of {VCG} with approximate resource allocation
  algorithms.
\newblock {\em Decision Support Systems} 51(3):648--656.

\end{thebibliography}

%\newpage
\section{Appendix}

\subsection{Team Size Constraints that Admit a Feasible Partition}

%An ASHG with coalition size constraints is formulated as a tuple $(N, \succ, \underline{k}, \overline{k})$. 
Some tuples $(N, \underline{k}, \overline{k})$ are not feasible, meaning that it is not possible to divide $|N|$ players into teams with sizes in $[\underline{k}, \overline{k}]$. Recall that we require by definition $1 \leq \underline{k} \leq \overline{k} \leq |N|$. For a minimal example, it is not possible to divide $3$ players into teams with $\underline{k} = 2$, $\overline{k} = 2$.

%\smallskip
%\noindent
%\textbf{Observation 1.} \emph{A tuple $(N, \underline{k}, \overline{k})$ is feasible if and only if: %$\underline{k}$ divides $|N|$, $\overline{k}$ divides $|N|$, or $(|N| \setminus \underline{k}) > (|N| %\setminus \overline{k})$, where $\setminus$ signifies integer division.}

\begin{observation}
\label{T:feasibility}
A tuple $(N, \underline{k}, \overline{k})$ is feasible if and only if: $\underline{k}$ divides $|N|$, $\overline{k}$ divides $|N|$, or $(|N| \setminus \underline{k}) > (|N| \setminus \overline{k})$, where $\setminus$ signifies integer division.
\end{observation}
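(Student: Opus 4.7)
The plan is to reformulate feasibility as an arithmetic condition about the number of teams, then show that condition matches the disjunction in the statement case by case. Let $n = |N|$. A feasible partition uses some integer $t$ number of teams, and it exists iff one can write $n = \sum_{i=1}^{t} k_i$ with each $k_i \in [\underline{k},\overline{k}]$. Since team sizes can be chosen freely in that interval, this is possible for a given $t$ iff $t\underline{k} \le n \le t\overline{k}$, i.e.\ iff $n/\overline{k} \le t \le n/\underline{k}$. Hence feasibility of $(N,\underline{k},\overline{k})$ is equivalent to the existence of an integer $t$ with $\lceil n/\overline{k} \rceil \le t \le \lfloor n/\underline{k} \rfloor$, which in turn is equivalent to the single inequality
\[
\lceil n/\overline{k} \rceil \;\le\; \lfloor n/\underline{k} \rfloor. \qquad (\star)
\]

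Next I would show $(\star)$ is equivalent to the stated disjunction. For the ``if'' direction, treat each disjunct. If $\underline{k} \mid n$, then $\lfloor n/\underline{k}\rfloor = n/\underline{k} \ge \lceil n/\underline{k}\rceil \ge \lceil n/\overline{k}\rceil$ since $\underline{k} \le \overline{k}$. If $\overline{k} \mid n$, then $\lceil n/\overline{k}\rceil = n/\overline{k}$, which is an integer no greater than $n/\underline{k}$, and hence at most $\lfloor n/\underline{k}\rfloor$. If $\lfloor n/\underline{k}\rfloor > \lfloor n/\overline{k}\rfloor$ and the previous two cases fail (so in particular $\overline{k}\nmid n$), then $\lceil n/\overline{k}\rceil = \lfloor n/\overline{k}\rfloor + 1 \le \lfloor n/\underline{k}\rfloor$, giving $(\star)$.

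For the ``only if'' direction, I would argue by contradiction: suppose $(\star)$ holds but $\underline{k}\nmid n$, $\overline{k}\nmid n$, and $\lfloor n/\underline{k}\rfloor \le \lfloor n/\overline{k}\rfloor$. Since $\underline{k} \le \overline{k}$ always forces $\lfloor n/\underline{k}\rfloor \ge \lfloor n/\overline{k}\rfloor$, the inequality tightens to $\lfloor n/\underline{k}\rfloor = \lfloor n/\overline{k}\rfloor$. Because $\overline{k} \nmid n$, we have $\lceil n/\overline{k}\rceil = \lfloor n/\overline{k}\rfloor + 1 = \lfloor n/\underline{k}\rfloor + 1 > \lfloor n/\underline{k}\rfloor$, contradicting $(\star)$.

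The only mildly subtle step is the first equivalence, turning ``partition exists'' into $(\star)$: it uses the fact that once the number of teams $t$ lies in the feasible range, the $k_i$'s can always be chosen by starting everyone at $\underline{k}$ and greedily adding the excess $n - t\underline{k} \in [0, t(\overline{k}-\underline{k})]$ to teams one unit at a time. I expect this to be the main ``obstacle,'' though it is really a routine counting argument; the rest is just integer arithmetic and careful case splitting on whether $\underline{k}$ or $\overline{k}$ divides $n$.
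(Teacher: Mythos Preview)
Your argument is correct. The reduction of feasibility to the existence of an integer $t$ with $t\underline{k}\le n\le t\overline{k}$, and hence to the inequality $\lceil n/\overline{k}\rceil\le\lfloor n/\underline{k}\rfloor$, is exactly the right reformulation, and your case analysis tying this inequality to the three-way disjunction is clean and complete in both directions.

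There is nothing to compare against: the paper states this as an Observation and provides no proof at all. Your write-up therefore supplies what the paper omits. The only cosmetic point is that the paper uses the symbol $\setminus$ for integer division (i.e., floor of the quotient), so when you write up the final version you may wish to state explicitly that $(|N|\setminus\underline{k})$ means $\lfloor |N|/\underline{k}\rfloor$, to match the paper's notation.
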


\subsection{Computing A Social Welfare-Maximizing Partition}

In order to find a social welfare-maximizing partition, we formulate the problem as an MIP and solve it using CPLEX. We are given a feasible team formation problem as $(N, \succ, \underline{k}, \overline{k})$, where either $\underline{k} = \overline{k}$, or $\underline{k} + 1 = \overline{k}$ and neither $\underline{k}$ nor $\overline{k}$ divides $|N|$. Let $T$ equal the number of teams that results when as many teams of size $\overline{k}$ as possible are formed, the rest being of size $\underline{k}$. 

We introduce a matrix $x \in  \{0, 1\}^{T \times |N|}$, where each row corresponds to one team, and the $1$ values in the row indicate which agents are on that team. We also introduce a dummy variable, $S \in \{0, 1\}^{T \times |N| \times |N|}$, where $S_{tij} = 1$ if agents $i$ and $j$ are both on team $t$, otherwise $0$. $S'_{ij} \in \{0, 1\}^{|N| \times |N|}$ is $1$ if and only if agents $i$ and $j$ are on the same team, and is the sum over $T$ values of $S_{tij}$ for $i$ and $j$.

The MIP for maximize social welfare is then:

\begin{align*}
\max_x : \sum_{i=1}^{|N|} \sum_{j=i+1}^{|N|} S'_{ij} (u_{ij} + u_{ji}) \\
\textrm{subject to:} \\
x \in \{0, 1\}^{T \times |N|} \\
S \in \{0, 1\}^{T \times |N| \times |N|} \\
S' \in \{0, 1\}^{|N| \times |N|} \\
2 S_{tij} \leq x_{ti} + x_{tj} : \forall t, i, j \\
S_{tij} \geq x_{ti} + x_{tj} - 1 : \forall t, i, j \\
S'_{ij} = \sum_{t=1}^T S_{tij} : \forall i, j \\
\underline{k} \leq \sum_{j=1}^{|N|} x_{tj} \leq \overline{k} : \forall t \\
\sum_{t=1}^{T} x_{tj} = 1 : \forall j \\
\end{align*}

We implemented a second version of the maximize social welfare MIP, which appears to run markedly faster. In this version, a matrix $x \in \{0, 1\}^{|N| \times |N|}$ has a $1$ in row $i$ for each agent on the team of agent $i$. We require that $x_{ij} = x_{ji}$ so that demands are reciprocal. We also require that each agent demand itself, so $x_{ii} = 1$. Each team size must be in $[\underline{k}, \overline{k}]$, so the sum of each row of $x$ must be in this range. Finally, we require that any two rows in $x$ either not have a $1$ in any of the same columns, or must be identical; this means that if some agent $i$ appears on two teams (i.e., in two rows), those teams must contain exactly the same agents.

\begin{align*}
\max_x: \sum_{i=1}^{|N|} \sum_{j=1}^{|N|} x_{ij} u_{ij} \\
\textrm{subject to:} \\
x \in \{0, 1\}^{|N| \times |N|} \\
x_{ii} = 1 : \forall i \\
x_{ij} - x_{ji} = 0 : \forall i, j \\
\underline{k} \leq \sum_{j=1}^{|N|} x_{ij} \leq \overline{k} : \forall i \\
x_{ij} + x_{i'j} + x_{ij'} - x_{i'j'} \leq 2 : \forall i, i' \neq i, j, j' \neq j \\
\end{align*}

\smallskip
\noindent
\subsection{Proof of Proposition~\ref{T:envyBoundedBySingle}} Consider a team formation problem with $6$ agents, $\{A, B, C, D, E, F\}$, $\underline{k} = 3$, $\overline{k} = 3$, so that two equal-size teams must be formed. The agents' additive separable preferences are encoded in Table \ref{tab:envyBoundSingle}.

\begin{table}[h!]
\centering
\begin{tabular}{ c | c c c c c c c }
     & $A$ & $B$ & $C$ & $D$ & $E$ & $F$ \\
  \hline
  $A$ & x & 0 & 1 & 2 & 4 & 8 \\
  $B$ & 8 & x & 4 & 2 & 1 & 0 \\
  $C$ & 8 & 0 & x & 4 & 2 & 1 \\
  $D$ & 8 & 1 & 0 & x & 4 & 2 \\
  $E$ & 8 & 2 & 1  & 0 & x & 4 \\
  $F$ & 8 & 4 & 2 & 1 & 0 & x \\
\end{tabular}
\caption{Each row $i$ encodes the additive separable value for agent $i$ of each other agent.}
\label{tab:envyBoundSingle}
\end{table}

No partition of these agents into two teams of size $3$ gives every agent envy bounded by a single teammate. To see this, consider that each agent other than $A$ has a bliss point on a team with $A$ and one other agent, where the second agent is $C$ for agent $B$, $D$ for agent $C$, and so on until ``wrapping around'' with $B$ for agent $F$. Three of the agents will not be on a team with agent $A$, and at least one of these agents, say agent $i$, will not be on a team with its second-favorite agent either. Some other agent $j$ must then be on a team with the two most-preferred agents of the player $i$. By construction, player $i$ is on a team of value $3$ or less, while the team of agent $j$ has value $12$ to agent $i$, and value $4$ to agent $i$ with its more valuable player (player $A$) removed. Therefore, envy cannot be bounded by a single teammate for all agents. \qed

\paragraph{Maximin Share Guarantee}

The maximin share guarantee is a concept from multi-unit assignment that can be applied to hedonic games; in multi-unit assignment, ``bundles'' of ``items'' are allocated to agents instead of ``teams.'' A maximin share for an agent is the agent's least-preferred bundle in a \emph{maximin split} for the agent.  A maximin split for agent $i$ is a partition of all items into bundles such that each agent can receive one bundle, where the partition maximizes the utility to $i$ of the least-valuable bundle. Note that envy-freeness in a split for an agent implies proportionality, and proportionality implies maximin shares \cite{procaccia2014fair}.

We introduce a modified version of the maximin share guarantee in our setting, the \emph{maximin share guarantee for team formation}. A partition in a team formation problem provides maximin shares for team formation, if each agent weakly prefers its team to its maximin share. Given $(N, \succ_i, \underline{k}, \overline{k})$, consider all partitions of the agents $N$ with team sizes in $[\underline{k}, \overline{k}]$. A maximin split for agent $i$ is any such partition that maximizes the value for $i$ of the least-preferred team containing $i$, which would result from swapping $i$ with some agent $j$, where $j$ may equal $i$. A maximin share is the least-preferred team in a maximin split. This definition of the maximin share guarantee preserves a useful property of the guarantee for multi-unit assignment, which is that an agent may be assigned one of $|N|$ teams from its maximin split (up to the loss of the other agent in the swap from a team, in the case of team formation).

The maximin share guarantee for team formation, in the case of mechanism design for team formation problems (MDTFs) with non-negative values, is trivially satisfiable for $\overline{k} = |N|$: If the grand coalition is assigned, every agent's payoff is maximized, so envy-freeness is achieved. The maximin share guarantee for team formation, in the case of MDTFs with non-negative values, is also trivially satisfiable for $\overline{k} = 2$, where $|N|$ is even. In the maximin split for an agent with $\overline{k} = 2$, all agents are grouped in pairs (assuming $|N|$ is even), and the agent's maximin share is the pair with its least-favorite other agent. Thus, any grouping of the agents into pairs satisfies the maximin share guarantee.

\smallskip
\begin{proposition}
\label{T:maximin}
For some MDTFs with non-negative values, $\overline{k} \geq 3$, and $\overline{k} < |N|$, the maximin share guarantee for team formation is not satisfiable. As a result, proportionality and envy-freeness are not satisfiable either in such cases.
\end{proposition}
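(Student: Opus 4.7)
The plan is to reuse the 6-agent instance from the proof of Proposition \ref{T:envyBoundedBySingle}, which already satisfies $\overline{k} = 3$ and $\overline{k} < |N| = 6$. I would first lower-bound every agent's maximin share by $3$, then show that no feasible partition of that instance assigns every agent a team of value at least $3$. The second claim of the proposition (failure of proportionality and envy-freeness) then follows at once from the hierarchy envy-free $\Rightarrow$ proportional $\Rightarrow$ maximin-share noted in the excerpt, so the whole argument reduces to the maximin-share analysis.

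For the lower bound I would exploit the cyclic symmetry among $\{B, C, D, E, F\}$: relative to any such agent $i$, the successors along the 5-cycle have values $4, 2, 1, 0$. The partition $\{A, B, F\}, \{C, D, E\}$ is a maximin-split witness for $B$ (its three cross-team swaps produce teams worth $3, 5, 6$ to $B$, and the original team is worth $8$) and simultaneously for $A$ (cross-team swaps produce $6, 5, 3$; original team $8$). Cyclically rotated partitions serve as witnesses for $C, D, E, F$, so every maximin share is at least $3$. For the infeasibility step, every partition consists of a team $\{A, X, Y\}$ together with a non-$A$ team $T$ of three agents from $\{B, C, D, E, F\}$; the non-$A$ members of $\{A, X, Y\}$ each receive value $\geq u_i(A) = 8$, so the threshold $3$ can only fail on $A$ or on $T$. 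Using the cyclic pattern, the value to a non-$A$ agent of a team whose two teammates sit at cyclic offsets $\{p, q\} \subseteq \{1, 2, 3, 4\}$ is $v(p) + v(q)$ with $(v(1), v(2), v(3), v(4)) = (4, 2, 1, 0)$, which is at least $3$ only for $\{p, q\} \in \{\{1,2\}, \{1,3\}, \{1,4\}, \{2,3\}\}$. Sweeping the $\binom{5}{3} = 10$ choices of $T$ shows each such $T$ contains at least one member whose teammates occupy a disallowed offset pair, so the threshold already fails on $T$.

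The main obstacle is keeping the 10-subset case check concise; the cyclic reduction above is what makes it tractable, reducing the check to a small combinatorial claim about which $3$-subsets of the $5$-cycle admit only allowed offset pairs (none do), which I would present as a short tabulation in the full proof.
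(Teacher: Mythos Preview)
Your proof is correct, but it takes a different route from the paper. The paper does \emph{not} reuse the instance from Proposition~\ref{T:envyBoundedBySingle}; instead it introduces a fresh $6$-agent example (with a different utility table whose entries lie in $\{0,1,2,3\}$) and simply asserts, without spelling out the case analysis, that every agent's maximin share equals $3$ while each of the $10$ partitions leaves some agent with utility at most $2$.

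Your approach has two genuine advantages. First, recycling the Proposition~\ref{T:envyBoundedBySingle} instance economizes on ad-hoc constructions. Second, the cyclic symmetry among $\{B,C,D,E,F\}$ lets you reduce both the maximin-share lower bound (one witness partition for $B$, rotated for the others) and the infeasibility step (classifying offset pairs as allowed/disallowed) to structured arguments rather than a flat $10$-row tabulation. The paper's version, by contrast, is shorter to state but pushes all verification onto the reader. One small point worth making explicit in your write-up: the failing agent in $T$ always receives value at most $2$ (since the disallowed offset pairs $\{2,4\}$ and $\{3,4\}$ give $2$ and $1$ respectively), which is strictly below the maximin-share lower bound of $3$ you established---so the guarantee genuinely fails, not just weakly.
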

\begin{proof}
Consider a team formation problem with $6$ agents, $\{A, B, C, D, E, F\}$, $\underline{k} = 3$, $\overline{k} = 3$, so that two equal-size teams must be formed. The agents' additive separable preferences are encoded in Table \ref{tab:maximin}.

\begin{table}[h!]
\centering
\begin{tabular}{ c | c c c c c c c }
     & $A$ & $B$ & $C$ & $D$ & $E$ & $F$ \\
  \hline
  $A$ & x & 1 & 0 & 2 & 3 & 2 \\
  $B$ & 2 & x & 1 & 0 & 3 & 2 \\
  $C$ & 2 & 2 & x & 1 & 0 & 3 \\
  $D$ & 1 & 2 & 2 & x & 2 & 1 \\
  $E$ & 0 & 2 & 3  & 1 & x & 2 \\
  $F$ & 2 & 1 & 2 & 2 & 1 & x \\
\end{tabular}
\caption{Each row $i$ encodes the additive separable value for agent $i$ of each other agent.}
\label{tab:maximin}
\end{table}

\smallskip
It is easy to see that in the problem described above, each agent's maximin share for team formation has value $3$. Furthermore, any of the $10$ possible partitions of the agents into two equal-size teams leaves some agent with utility of $2$ or less. Therefore, no partition for this game provides every agent with a maximin share. In consequence, no partition for this problem is proportional or envy-free either. 
\end{proof}

\subsection{Proof of Proposition~\ref{T:rsdic}}

Random serial dictatorship for team formation is strategyproof, meaning that it is a dominant strategy for each agent to report its true values for other agents, regardless of the other agents' reports. Consider that each agent affects the partition returned by RSD only if the agent is a team captain, based on the serial order over players and the choices of earlier team captains. Therefore, if an agent is not a team captain, the agent's report makes no difference. If an agent is a team captain, the agent is assigned its most-preferred team based on the remaining players, team size constraints, and the agent's reported preferences. Therefore, it is a dominant strategy for the agent to report its true preferences.

Random serial dictatorship is ex post Pareto efficient for MDTFs with strict preferences over teams (i.e., without indifferences over teams). The first team captain is assigned its strictly most-preferred team, of the largest feasible team size. Thus, for non-negatively valued MDTFs, no other agent could swap or take agents from the first captain's team without decreasing its value to the captain. The same argument holds, by induction, for later team captains. Thus, the partition returned by RSD is ex post Pareto efficient.

Random serial dictatorship for team formation is not envy-free, because for some problem instances, no envy-free partition exists, as shown in Proposition~\ref{T:maximin}. \qed

\subsection{Proof of Proposition~\ref{T:hbs}}

The Harvard Business School draft for team formation is not envy-free, because for some problem instances, no envy-free partition exists, as shown in Proposition~\ref{T:maximin}.

The HBS draft for team formation is not ex post Pareto efficient. For a counter-example, consider the following team formation problem, with $\underline{k} = \overline{k} = 3$ and player order $(A, B, C, D, E, F)$, such that $A$ and $B$ will be the team captains. The agent preferences are encoded in Table \ref{tab:hbsEfficient}.

\begin{table}[h!]
\centering
\begin{tabular}{ c | c c c c c c c }
     & $A$ & $B$ & $C$ & $D$ & $E$ & $F$ \\
  \hline
  $A$ & x & 1 & 8 & 0 & 6 & 4 \\
  $B$ & 1 & x & 10 & 0 & 5 & 3 \\
  $C$ & 0 & 8 & x & 5 & 4 & 2 \\
  $D$ & 0 & 8 & 5 & x & 4 & 2 \\
  $E$ & 8 & 0 & 5  & 4 & x & 2 \\
  $F$ & 8 & 0 & 5 & 4 & 2 & x \\
\end{tabular}
\caption{Each row $i$ encodes the additive separable value for agent $i$ of each other agent.}
\label{tab:hbsEfficient}
\end{table}

In this example, the following selections are made in the HBS draft. $A$ selects $C$, $B$ selects $E$, $B$ selects $F$, and then $A$ selects $D$. The resulting partition is $\{(ACD), (BEF)\}$. This partition is not ex post Pareto efficient, because all the agents would prefer the partition $\{(AEF), (BCD)\}$.

The HBS draft for team formation is not strategyproof. Consider a minimal-size example of a problem instance with non-negative values, where the HBS draft is not strategyproof. Let $N = \{A, B, C, D, E, F\}$, $\underline{k} = 3$, and $\overline{k} = 3$. The players' additive separable values to each other are encoded in Table \ref{tab:hbsIc}.

\begin{table}[h!]
\centering
\begin{tabular}{ c | c c c c c c c }
     & $A$ & $B$ & $C$ & $D$ & $E$ & $F$ \\
  \hline
  $A$ & x & 5.0 & 4.9 & 7.0 & 0.2 & 0.0 \\
  $A'$ & x & 5.0 & 6.0 & 7.0 & 0.2 & 0.0 \\
  $B$ & 0.0 & x & 1.1 & 1.6 & 1.2 & 1.3 \\
  $C$ & 0.0 & 1.1 & x & 1.6 & 1.2 & 1.3 \\
  $D$ & 0.0 & 1.1 & 1.6 & x & 1.2 & 1.3 \\
  $E$ & 0.0 & 1.1 & 1.6  & 1.2 & x & 1.3 \\
  $F$ & 0.0 & 1.1 & 1.6 & 1.2 & 1.3 & x \\
\end{tabular}
\caption{Each row $i$ encodes the additive separable value for agent $i$ of each other agent. Row $A'$ shows the false report of agent $A$.}
\label{tab:hbsIc}
\end{table}

In the problem shown in Table \ref{tab:hbsIc}, agent A gets greater utility by misreporting its preferences as in $A'$ for $24$ serial orders over the players, and never receives worse utility than when reporting $A$. In each of the $24$ cases where $A$ gains by reporting $A'$, utility improves from $5.2$ to $9.9$, as the received team shifts from $(ABE)$ to $(ABC)$. This occurs when the first player to choose is $A$ and the second is $D$. Therefore, the HBS draft is not strategyproof, because in the example shown, player $A$ gets better expected utility by misreporting its preferences. \qed

\subsection{Proof of Proposition~\ref{T:opopic}}

Consider a minimal-size example of a team formation problem with non-negative values, where the OPOP draft is not strategyproof. Let $N = \{A, B, C, D, E, F\}$, $\underline{k} = 3$, and $\overline{k} = 3$. The players' additive separable values to each other are encoded in Table \ref{tab:opopic}.

\begin{table}[h!]
\centering
\begin{tabular}{ c | c c c c c c c }
     & $A$ & $B$ & $C$ & $D$ & $E$ & $F$ \\
  \hline
  $A$ & x & 5.0 & 4.9 & 7.0 & 0.2 & 0.0 \\
  $A'$ & x & 5.0 & 6.0 & 7.0 & 0.2 & 0.0 \\
  $B$ & 0.0 & x & 1.1 & 1.6 & 1.2 & 1.3 \\
  $C$ & 0.0 & 1.1 & x & 1.6 & 1.2 & 1.3 \\
  $D$ & 0.0 & 1.1 & 1.6 & x & 1.2 & 1.3 \\
  $E$ & 0.0 & 1.1 & 1.6  & 1.2 & x & 1.3 \\
  $F$ & 0.0 & 1.1 & 1.6 & 1.2 & 1.3 & x \\
\end{tabular}
\caption{Each row $i$ encodes the additive separable value for agent $i$ of each other agent. Row $A'$ shows the false report of agent $A$.}
\label{tab:opopic}
\end{table}

\smallskip
The OPOP draft is a randomized mechanism, which works by uniformly randomly choosing a permutation of the players, then acting deterministically based on that order. Strategyproofness for a randomized mechanism means that a player's expected payoff is maximized by truthful reporting, regardless of other players' actions. To show that the OPOP draft is not strategyproof, it suffices to show that player $A$ gains in expected payoff, based on the true preferences in row $A$ of the table, by misreporting its preferences as in row $A'$ in the table, if other players report their preferences as displayed in the table.

The example problem has $6$ players, so there are $6! = 720$ permutations, and $720$ equally likely outcomes from the OPOP draft, some of which are identical in their induced partitions. It turns out that the payoff for player $A$ is better when reporting $A'$ instead of $A$ for $18$ of $720$ permutations, and worse in $6$ others. Specifically, player $A$ performs better by lying if and only if the order of the first three players to choose is $(ADB)$, $(ADE)$, or $(ADF)$. Player $A$ does worse by reporting $A'$ if and only if the order of the first three players is $(ADC)$. After factoring in the value of the resulting partitions to player $A$, it results that the expected gain from defecting to $A'$ is $0.08$ utils per game. Thus, the OPOP draft is not strategyproof.

The One-Player-One-Pick draft for MDTFs is not envy-free, because for some team formation problem instances, no envy-free partition exists, as shown in Proposition~\ref{T:maximin}. 

The OPOP draft for MDTFs is not ex post Pareto efficient, as shown in the following minimal example. Let $N = \{A, B, C, D, E, F\}$, $\underline{k} = 3$, and $\overline{k} = 3$. Let the serial order of the players be alphabetical, so $A$ selects first, and so on. The players' additive separable values to each other are encoded in Table \ref{tab:opopEfficient}.

\begin{table}[h!]
\centering
\begin{tabular}{ c | c c c c c c c }
     & $A$ & $B$ & $C$ & $D$ & $E$ & $F$ \\
  \hline
  $A$ & x & 2 & 10 & 9 & 6 & 0 \\
  $B$ & 0 & x & 10 & 9 & 2 & 6 \\
  $C$ & 0 & 10 & x & 2 & 6 & 9 \\
  $D$ & 10 & 0 & 2 & x & 6 & 9 \\
  $E$ & 10 & 0 & 2  & 6 & x & 9 \\
  $F$ & 0 & 10 & 2 & 6 & 9 & x \\
\end{tabular}
\caption{Each row $i$ encodes the additive separable value for agent $i$ of each other agent.}
\label{tab:opopEfficient}
\end{table}

In the example problem in Table \ref{tab:opopEfficient}, the following selections are made, in order. Team captain $A$ selects $C$, team captain $B$ selects $D$, $C$ (already teamed with $A$) selects $F$, and as a result $D$ (already teamed with $B$) must select $E$. This produces the partition $\{(ACF), (BDE)\}$. Every player would receive greater utility from the alternative partition $\{(ADE), (BCF)\}$. Therefore, the OPOP draft is not ex post Pareto efficient. \qed

\subsection{Details of A-CEEI-TF}

Each agent reports to the mechanism the additive separable utility it gains from each other agent. The mechanism uses these reports to derive which affordable team each agent demands, given each agent's budget and price. The total utility for an agent $i$ of a team of other agents of size in $[\underline{k} - 1, \overline{k} - 1]$, is the sum of the utilities for agent $i$ of the agents in that team. The clearinghouse will provisionally assign agent $i$ the team of highest utility for $i$ that it can afford, or the empty team if it cannot afford any team of legal size.

Just as in combinatorial matching, we cannot in general hope for an exact market clearing solution (and, consequently, an exact CEEI) in our setting:
\begin{proposition}
\label{T:noExactClearing}
There exist team formation settings where no price and budget vectors $(p, b)$ exist that induce exact market clearing.
\end{proposition}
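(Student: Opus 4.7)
My plan is to prove the proposition by exhibiting a small explicit counterexample, in the spirit of the proof of Proposition~\ref{T:envyBoundedBySingle}. I would take the 4-agent instance $N = \{A, B, C, D\}$ with $\underline{k} = \overline{k} = 2$ and cyclically symmetric Condorcet-like preferences: $A: B \succ C \succ D$, $B: C \succ D \succ A$, $C: D \succ A \succ B$, $D: A \succ B \succ C$. There are only three feasible partitions into two pairs---$\{\{A,B\},\{C,D\}\}$, $\{\{A,C\},\{B,D\}\}$, and $\{\{A,D\},\{B,C\}\}$---and the plan is to rule each out in turn.

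For a candidate partition to be the exact market-clearing outcome under some $(p, b)$, each agent $i$ must find that its uniquely most-preferred affordable teammate (of legal size one, given $\overline{k} = 2$) is exactly the teammate assigned to $i$. This translates into a system of linear inequalities: for the teammate $j$ assigned to $i$, affordability gives $p_j \leq b_i$, and for every other agent $k$ that $i$ strictly prefers to $j$, unaffordability gives $p_k > b_i$. Reciprocity ties these constraints across the two members of each team.

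The key observation is that the cyclic preference structure forces every candidate partition to produce a closed chain of strict budget inequalities. For example, in $\{\{A,B\},\{C,D\}\}$, requiring $B$ to demand $A$ over its strictly preferred $C$ yields $p_C > b_B$, while requiring $D$ to demand $C$ over its strictly preferred $A$ yields $p_C \leq b_D$; combining gives $b_B < b_D$. The same two demand conditions also force $p_A > b_D$ and $p_A \leq b_B$, giving $b_D < b_B$---a contradiction. The remaining two partitions should yield analogous chains, and I expect the rotational symmetry under the relabeling $A \to B \to C \to D \to A$ to reduce the analysis to essentially two orbit representatives (the partition $\{\{A,C\},\{B,D\}\}$ is fixed by the rotation, while the other two partitions form a single orbit). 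The main obstacle is just the bookkeeping of the three cases; since the contradictions arise from pure strict-inequality cycles and not from numerical slack in the budget window, the proof does not rely on the specific constraint $\bar{b} < 1 + 1/|N|$ at all.
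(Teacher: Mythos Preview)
Your proposal is correct and essentially identical to the paper's own proof: the paper uses exactly this 4-agent instance with $\underline{k}=\overline{k}=2$ and the same cyclic preferences (encoded cardinally as $2,1,0$), reduces to two partitions by the rotational symmetry you identify, and derives the same strict-inequality cycle $p_A \le b_B < p_C \le b_D < p_A$ for $\{\{A,B\},\{C,D\}\}$. Your derivation of the contradiction via $b_B<b_D$ and $b_D<b_B$ is just a regrouping of that chain, and your observation that the argument is independent of the budget window matches the paper's remark that no $(b,p)$ works ``no matter how unequal the agents' budgets.''
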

%\smallskip
%\noindent
%\textbf{Proposition 2.} \emph{For some MDTFs with non-negative values, $\underline{k} > 1$, and $\overline{k} < |N|$, no price vector $p$ and budget vector $b$ exist that induce exact market clearing.}

%\subsection{Proof of Proposition~\ref{T:noExactClearing}}

\begin{proof}
We define \emph{exact market clearing for team formation} to mean that when each agent is allocated its favorite team that is affordable:

\begin{itemize}
\item Each agent is assigned to a team with size in $[\underline{k}, \overline{k}]$, including itself in the team size.
\item For any two distinct agents $i$ and $j$, $i$ demands $j$ if and only if $j$ demands $i$.
\end{itemize}

This result marks a distinction from the course allocation problem (also known as combinatorial assignment), where with sufficiently unequal budgets, some price vector must exist that induces exact market clearing \cite{budish2011combinatorial}.

\noindent
\emph{Example.} Consider a team formation problem with $4$ agents, $\{A, B, C, D\}$, with $\underline{k} = 2$ and $\overline{k} = 2$. The agents' additive separable preferences are encoded in Table \ref{tab:noExactClear}.

\begin{table}[h!]
\centering
\begin{tabular}{ c | c c c c }
     & $A$ & $B$ & $C$ & $D$ \\
  \hline
  $A$ & x & 2 & 1 & 0 \\
  $B$ & 0 & x & 2 & 1 \\
  $C$ & 1 & 0 & x & 2 \\
  $D$ & 2 & 1 & 0 & x \\
\end{tabular}
\caption{Each row $i$ encodes the additive separable value for agent $i$ of each other agent.}
\label{tab:noExactClear}
\end{table}

\smallskip
In the example in Table \ref{tab:noExactClear}, only two resulting partitions must be considered, due to symmetry: $\{\{A, B\}, \{C, D\}\}$ and $\{\{A, C\}, \{B, D\}\}$.

In order for the market to clear as $\{\{A, B\}, \{C, D\}\}$, each agent must be able to afford its assigned partner, and each agent must not be able to afford any other agent that it prefers. If we assume the market clears in this way, the resulting inferences produce an impossible conclusion:

\begin{align*}
p_A \leq b_B < p_C \leq b_D < p_A 
\end{align*}

A similar conclusion can be drawn from the other partition, $\{\{A, C\}, \{B, D\}\}$. We find that no $(b, p)$ induces exact market clearing for this problem, no matter how unequal the agents' budgets in $b$ are.
\end{proof}

Define the price update error as
\[
z(p)_j =(1 + \epsilon - (\epsilon / \bar{b}) p_j) D_j - U_j.
\]
Since exact market clearing is not feasible, we make use of a relaxation.
We define a notion of \emph{relaxed market clearing}, which we will use in our search for approximately market-clearing prices. This form of market clearing allows a player on the market to be under-demanded if and only if its price is $0$. We adapt this relaxed form of market clearing from Budish \cite{budish2011combinatorial}. In relaxed market clearing:

\begin{itemize}
\item Each agent is assigned to a team with size in $[\underline{k}, \overline{k}]$, including itself in the coalition size, unless the agent has price $0$, in which case it may alternatively be assigned the team containing only itself.
\item For any two distinct agents $i$ and $j$, $i$ demands $j$ if and only if $j$ demands $i$.
\end{itemize}
%\smallskip
%\noindent
%\textbf{Proposition 3.} \emph{The function $f(\cdot)$ is a viable price update function for A-CEEI-TF, because: (a) for any fixed point $\tilde{p} \in \mathcal{\tilde{P}}$, $p = t(\tilde{p})$ must induce relaxed market clearing, and (b) the auxiliary price space $\mathcal{\tilde{P}}$ is closed under $f_{TF}(\cdot)$.}

%\smallskip
%\noindent
%\emph{Proof.} 

\begin{proposition}
\label{T:fixedPoint}
There exists a point $p^*$ in price space $\mathcal{P}$, for which a convex combination of the error terms $z(p')$ induced by some set of prices near $p^*$ equals the zero vector, as in relaxed market clearing.
\end{proposition}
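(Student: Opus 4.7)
The plan is to obtain $p^*$ as a fixed point of an upper hemicontinuous correspondence derived from $f_{TF}$, via Kakutani's fixed point theorem. The price space $\mathcal{P} = [0,\bar{b}]^{|N'|}$ is a nonempty, compact, convex subset of Euclidean space, so the hypotheses on the domain are immediate. By the admissibility of $f_{TF}$ (Proposition~\ref{T:priceUpdate}), fixed points of the update correspond to relaxed market clearing, so the entire problem reduces to establishing existence of a suitable fixed point.

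The core difficulty, and the main obstacle, is that $f_{TF}$ is not continuous: the demand counts $D_j$ and the under-demand indicator $U_j$ are step functions of $p$, because each agent's favorite affordable bundle changes discretely as prices cross thresholds where two bundles become equally affordable. To circumvent this, I would define a correspondence $F:\mathcal{P}\rightrightarrows\mathcal{P}$ by taking the shrinking-neighborhood closed convex hull of $f_{TF}$,
\[
F(p) \;=\; \bigcap_{\delta>0}\,\overline{\operatorname{conv}}\,\bigl\{f_{TF}(p') : p'\in\mathcal{P},\;\|p'-p\|<\delta\bigr\}.
\]
By construction $F(p)$ is nonempty, compact, and convex for every $p$, and $F$ is upper hemicontinuous (a standard property of the closed-convex-hull limit of a bounded map). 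Admissibility ensures $F$ maps $\mathcal{P}$ into itself.

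Kakutani's fixed point theorem then yields $p^*\in F(p^*)$, i.e., $p^*$ is a convex combination of values $f_{TF}(p'_1),\ldots,f_{TF}(p'_m)$ for prices $p'_k$ arbitrarily close to $p^*$. Writing $f_{TF}(p) = t(p) + z(p)/|N'|$, the fixed-point identity, at interior prices where $t$ acts as the identity, rearranges directly into the statement that a convex combination of the error vectors $z(p'_k)$ is the zero vector, which is exactly the claim. For coordinates of $p^*$ that lie on the boundary (pinned by the truncation $t$), the relaxation built into \emph{relaxed} market clearing --- which allows an agent priced at zero to be under-demanded --- is precisely what absorbs the boundary slack, so that the vanishing convex combination along the ``active'' coordinates still corresponds to a relaxed-clearing outcome.

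The hardest part of the argument is verifying this boundary behavior: a Kakutani fixed point on the boundary of $\mathcal{P}$ must be shown to represent genuine relaxed clearing rather than a spurious artifact of truncation. This reduces to checking that the sign of the untruncated update at a boundary coordinate points inward except in the exact cases permitted by the relaxation, which is the content of the admissibility guarantee in Proposition~\ref{T:priceUpdate}. With that case analysis in hand, the convex combination of $z(p'_k)$ vanishes as required, completing the argument.
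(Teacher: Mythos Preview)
Your overall strategy---convexify the discontinuous update $f_{TF}$ via a shrinking-neighborhood closed convex hull and then invoke Kakutani---is exactly the route the paper takes (following Budish, who in turn relies on Cromme--Diener's Lemma~2.4). The difference is not in the idea but in the \emph{domain} on which you run the argument, and that difference is a genuine gap.

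You work on $\mathcal{P}=[0,\bar b]^{|N'|}$ and assert that ``admissibility ensures $F$ maps $\mathcal{P}$ into itself.'' That is not what admissibility says: by Definition~1 in the paper, admissibility guarantees closure of the \emph{auxiliary} space $\mathcal{\tilde P}=[-1,1+\bar b]^{|N'|}$ under $f_{TF}$, not of $\mathcal{P}$. And $\mathcal{P}$ is in fact \emph{not} closed under $f_{TF}$: for $p\in\mathcal{P}$ one has $t(p)=p$, so $f_{TF}(p)_j=p_j+z(p)_j/|N'|$; if agent $j$ is under-demanded ($U_j=1$, $D_j=0$) at a small positive price $p_j<1/|N'|$, then $f_{TF}(p)_j<0$. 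Hence your correspondence $F$ need not take values in $\mathcal{P}$, and Kakutani does not apply on $\mathcal{P}$.

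The paper's fix is precisely the auxiliary space: apply the convex-hull/Kakutani machinery on $\mathcal{\tilde P}$, obtain a fixed point $\tilde p^*\in\mathcal{\tilde P}$, and then set $p^*=t(\tilde p^*)\in\mathcal{P}$. The enlarged box absorbs the overshoot of $f_{TF}$, and the truncation $t$ brings the fixed point back into $\mathcal{P}$; the relaxed-clearing interpretation at boundary coordinates (your final paragraph) then lines up with Proposition~\ref{T:priceUpdate}. If you reroute your argument through $\mathcal{\tilde P}$, the rest of what you wrote goes through essentially verbatim.
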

\begin{proof}
The proposition can be proved through a straightforward adaptation of the proof of Theorem 1 in Budish \cite{budish2011combinatorial}. First we note that any fixed point of $f_{TF}(\cdot)$ induces relaxed market clearing. Furthermore, the price space $\mathcal{\tilde{P}}$ is compact, convex, and closed under $f_{TF}(\cdot)$. We introduce a function $F(\tilde{p})$ that takes the convex hull of all values $f_{TF}(\tilde{p}')$, for $\tilde{p}'$ approaching in the limit but not equal to $\tilde{p}$. 

Lemma 2.4 of Cromme and Diener is useful here \cite{cromme1991fixed}. Take any space $\mathcal{X}$ that is a non-empty, compact, convex subset of $\mathbb{R}^d$, and any function $f_{TF}: \mathcal{X} \rightarrow \mathcal{X}$, such that $\mathcal{X}$ is closed under $f_{TF}$. Define $H_f: \mathcal{X} \rightarrow S$, $S \subset \mathcal{X}$, to be the function that takes any $x$ in $\mathcal{X}$ and yields the set:

\begin{align*}
\{y \in \mathcal{X} | \exists \textrm{ a sequence } x_i \rightarrow x, x_i \neq x : f_{TF}(x_i) \rightarrow y \}
\end{align*}

Lemma 2.4 of Cromme and Diener states that for any such function $f_{TF}$, there exists a point $x^* \in \mathcal{X}$ such that $x^*$ is in the convex hull of $H_f(x^*)$.

We can apply Cromme and Diener's result to $F(\cdot)$, because the auxiliary price space $\mathcal{\tilde{P}}$ is non-empty, compact, convex, and closed under $f_{TF}(\cdot)$, and $F(\cdot)$ is defined as the convex hull of Cromme and Diener's $H_f$. Therefore, as shown in Cromme and Diener, Kakutani's fixed point theorem implies that $F(\cdot)$ has a fixed point $\tilde{p}^*\in \mathcal{\tilde{P}}$. Thus, a convex combination of the error terms $z(p')$ near $t(\tilde{p}^*)$ yields relaxed market clearing.
\end{proof}
The fixed point result in Proposition~\ref{T:fixedPoint} implies that for prices $p'$ in the neighborhood of the truncation of a fixed point $\tilde{p}^*$ of the price update function, induced market clearing error will be low. Therefore, we will likely be able to find price vectors that yield low market clearing error by iteratively applying $f_{TF}(\cdot)$ to its own output, until an apparent fixed point is approached.

\subsection{Proof of Proposition~\ref{T:priceUpdate}}

First, we show that auxiliary price space $\mathcal{\tilde{P}}$ is closed under $f_{TF}(\cdot)$. The range of $z(p)_j$ for $p \in \mathcal{P}$ is $[-1, (|N'| - 1)(1 + \epsilon)]$. Thus, assuming $\epsilon$ is small, $|z(p)_j / |N'|| < 1$. Therefore, it is easy to see that auxiliary price space $[-1, 1 + \bar{b}]^{|N'|}$ is closed under $f_{TF}(\cdot)$.

Next, we show that if $\tilde{p} \in \mathcal{\tilde{P}}$ is a fixed point of $f_{TF}(\cdot)$, then $p = t(\tilde{p})$ must induce relaxed market clearing.

Each agent price $p_j$ in $p = t(\tilde{p})$ must either equal $0$, equal $\bar{b}$, or be in $(0, \bar{b})$.

If $p_j = \bar{b}$, then no agent can afford to demand agent $j$, so $U_j = 1$ and $D_j = 0$, implying that $z(p)_j$ is negative, so $p_j$ cannot be a fixed point. If $p_j$ has been truncated by $t(\cdot)$, this would reduce it, and $z(\cdot)$ further reduces it.

If $p_j = 0$, then if there is $D_j > 0$ and $U_j = 0$, the net effect of $z(p)_j$ is to increase $p_j$, and if $p_j$ has been truncated, then that would have increased $p_j$ also; so with or without truncation, $p_j$ would not be a fixed point if it had positive $D$ but not $U$. If $p_j = 0$ and $D_j = 0$, then under-demand for $j$ is allowed because the price of $j$ is zero, so this does not violate relaxed market clearing. $D_j$ and $U_j$ cannot both be positive, because if there is incoming demand, there is no under-demand by definition.

If $p_j \in (0, \bar{b}$), and $p_j$ is part of a fixed point $p$, then $U_j = 0$ and $D_j = 0$, because by definition, they cannot both be positive at once. \qed

%\subsection{Proof of Proposition~\ref{T:fixedPoint}}

\subsection{Proof of Proposition~\ref{T:aceeiTFic}}

First, we sketch a proof of strategyproofness-in-the-large.

Any individual agent in the continuum economy, being zero-measure, has no influence on the approximate equilibrium price vector arrived at by update function $f_{TF}(\cdot)$, at any iteration of the A-CEEI-TF mechanism. Therefore, the only effect the agent can have on the outcome is that, if the agent is randomly selected to choose its favorite affordable team of free agents that leaves a feasible subproblem, the agent's reported preferences determine which team the agent is assigned. Thus, it is a dominant strategy for the agent to report its true preferences, so that in this case the agent will be assigned its most-preferred allowable team. \qed

%\subsection{Proof of Proposition~\ref{T:aceeiTFenvy}}
Now we deal with approximate envy-freeness.

Consider a team formation problem where A-CEEI-TF yields exact market clearing with the same partition at each stage. If all agent budgets were equal, then the result would be envy-free. To see this, consider that each agent's team assigned by A-CEEI-TF in such cases is the same as the team induced by prices in the initial round of price search. In this initial round, every agent is able to afford the team of every other agent, because all agent budgets are equal, and all agents' demands are consistent (i.e., simultaneously satisfiable), because there is exact market clearing. Because the final partition returned is the same as that induced by the initial prices, no agent must envy any other in the final partition.

More generally, consider the same setting but without equal budgets. That is, consider a team formation problem with non-negative values and team size constraints, where A-CEEI-TF yields exact market clearing with the same partition induced at each stage. Recall that all agents' budgets are in the range $[1, \bar{b}]$, where $\bar{b} = 1 + (1 / |N|)$. Because $1 \leq \bar{k} \leq |N|$, this implies that $\bar{b} \leq 1 + (1 / \bar{k})$.

No agent $i$ can envy another agent $j$ in its own team, because $j$'s team (excluding the costs of $i$ and $j$) costs weakly less than the team of $i$ (excluding the cost of $i$), as all agent prices are non-negative, so $i$ could afford the team of other agents that $j$ receives (the team of agent $i$, excluding agents $i$ and $j$).

Assume for a contradiction that $i$ has envy for agent $j$ that is not bounded by a single teammate. Note that $i$ and $j$ must be assigned to distinct teams. Let $k'$ be the number of other agents in $j$'s team, $k' < \overline{k}$. Let $j_1$ be the first other agent in $j$'s team, and so on through $j_{k'}$.

Let $p$ be the exact market clearing prices that are found in the first stage of the A-CEEI-TF mechanism. Let $x$ be the partition that is induced by the mechanism, which we have assumed to be consistent across all stages of the price search. $x_i$ is then a vector in $\{0, 1\}^{|N|}$ that has a $1$ for each agent on the same team as agent $i$. For convenience in calculating the cost of the team for agent $i$, we let $x_{ii} = 0$ for all $i$, because an agent is not required to pay the cost of teaming with itself. Let $(x_i \setminus \{j\})$ be a vector equivalent to $x_i$, but with $x_{ij}$ set to $0$.

$p \cdot (x_j \setminus \{j_r\}) > b_i: \forall r \in \{1, \ldots, k'\}$. This is because if agent $i$'s envy for agent $j$ is not bounded by a single teammate, agent $i$ must not be able to afford any subset of $j$'s team created by removing one agent (and excluding agent $j$).

$(k' - 1) p \cdot x_j > k' b_i$. This comes from adding the L.H.S. and R.H.S. of the $k'$ inequalities above, where the subtracted teammates from $x_j$ cancel to produce one fewer complete $x_j$.

$b_j \geq p \cdot x_j$, because agent $j$ can afford its own team.

$(k' - 1) b_j > k' b_i$. Replacing $p \cdot x_j$ with $b_j$ in the inequality above.

$b_j / b_i > k' / (k' - 1)$.

Recall that $\bar{b} \leq 1 + (1 / \bar{k})$, and $\bar{k} \geq 1$, meaning that $\bar{b} \leq 1 + (1 / (\bar{k} - 1))$. This implies that $\bar{k} / (\bar{k} - 1) \geq \bar{b}$.

$\bar{k} > k'$, which implies $\bar{k} k' - \bar{k} < \bar{k} k' - k'$. So $\bar{k} (k' - 1) < k' (\bar{k} - 1)$. This means $\bar{k} / (\bar{k} - 1) < k' / (k' - 1)$.

$b_j / b_i > k' / (k' - 1) > \bar{k} / (\bar{k} - 1) > \bar{b}$.

$b_j / b_i > \bar{b}$. 

This contradicts the claim that all budgets are $\in [1, \bar{b}]$. Therefore, envy must be bounded by a single teammate in a team formation problem with non-negative values and team size constraints, where A-CEEI-TF yields exact market clearing with the same partition at each stage. \qed

\subsection{A-CEEI-TF Implementation}

Unlike the alternative mechanisms, A-CEEI-TF is far from straightforward to implement.
Below we describe an implementation which uses ideas from~\citeauthor{othman2010finding}~\shortcite{othman2010finding}.

To run the A-CEEI-TF mechanism, we used a modified form of the procedure described by Othman, Sandholm, and Budish, which alternates tabu search and mixed-integer program (MIP) solving until an approximate fixed point in price space is found \cite{othman2010finding}. The mechanism starts by randomly assigning approximately equal budgets to the agents  $\in (100, 100 + 100 / |N|)$, and setting prices for all agents to be equal to the minimum agent budget divided by the maximum number of agents that can be legally demanded, $(\overline{k} - 1)$. Based on the random order of players, the next unselected player in order is allowed to select its favorite affordable team of unselected players, of legal size and that leaves a feasible subproblem. Before a player can make this selection, the tabu search-MIP hybrid algorithm is run to assign the prices to remaining agents.

Each time the price search procedure is called, $20$ iterations of tabu search are run. For a given iteration, the current price vector is ``expanded'' to yield up to $5 |N| + 5$ new price vectors, each of which is evaluated in terms of its relaxed market clearing error. 

The new price vectors are generated as either unilateral deviations from the current price vector, or moves in the direction of the ``gradient,'' or $z(\cdot)$. The gradient neighbors of $p$ are generated by moving in the direction of $z(p)$, for each of the step sizes $(10.0, 5.0, 1.0, 0.5, 0.1)$, where the step size is just the $L_2$-norm of the change in $p$. For each agent $j$, we generate unilateral neighbors of $p$ by setting the agent's price to $0$ if $z(p)_j \leq 0$, otherwise by increasing the agent's price by each of $(1.0, 0.5, 0.1, 0.05, 0.001)$. Any neighbor prices outside of $[0, \bar{b}]$ are truncated to this range.

Each neighbor price vector of the previous price $p$ is evaluated for its market clearing error, using an MIP solver to determine which team each agent would demand at those prices. We used CPLEX v12.5, with Concert bindings for Java, to solve each MIP. Market clearing error for a price vector $p$ is defined as the $L_2$-norm of $z(p)$.

A ``step'' of the tabu search entails moving from the previous price vector to the price vector of lowest error among the neighbors, that is not in the tabu list (a list of the previously visited price vectors). After $20$ iterations, the visited price vector that produced the lowest (best) market clearing error is returned.

To find which team an agent $j$ will demand, given a set of legal team sizes that leave feasible subproblems, we first break the legal team sizes into groups of consecutive integers. For example, if $\underline{k} = 2, \overline{k} = 4$, and there are $4$ agents remaining including the ``self'' agent, it is not legal for the agent to demand $2$ other agents for a team of size $3$, even though $3 \in [2, 4]$, because this does not leave a feasible subproblem. So in this case we would first break up the problem into two distinct MIPs, one with $\underline{k} = \overline{k} = 2$, the other with $\underline{k} = \overline{k} = 4$. The agent's preferred team from the union of the MIP result teams would then be returned as the agent's demand.

Now we have a price vector $p$ over the $M$ other remaining agents and a consecutive sequence of integers that are legal team sizes, $\{\underline{k}, \underline{k} + 1, \ldots, \overline{k}\}$. The demand of agent $j$ with budget $b_j$ is:

\begin{align*}
\max_c : \sum_{i=1}^{|M|} c_i u_j(i) \\
\textrm{subject to:} \\
c \in \{0, 1\}^{|M|} \\
\sum_{i=1}^{|M|} c_i p_i \leq b_j \\
\underline{k} - 1 \leq \sum_{i=1}^{|M|} c_i \leq \overline{k} - 1
\end{align*}

\leaveout{
\subsection{Data Sets: Preference Similarity}

The four classes of data set we analyze differ most saliently in their number of agents, $|N|$, and in the degree of similarity among the preferences of their agents. For example, Random-similar agents largely agree on which other agents are most valuable, while Random-scattered agents have little agreement. Differences in degree of preference similarity lead to marked differences in the performance outcomes of the various mechanisms.

As our measure of agent preference similarity in a data set, we let $\mathcal{C}$ equal the mean cosine similarity among all pairs of distinct agents in that data set. Because each agent assigns itself a value of $0$ or undefined, we take the cosine similarity between agents $i$ and $j$ only over their values for agents in $N \setminus \{i, j\}$:

\begin{align*}
u_{i-ij} = u_i \setminus \{u_{ii}, u_{ij}\} \\
\mathcal{C} = \frac{ \sum_{i=1}^{|N|} \sum_{j=i+1}^{|N|} \frac{u_{i-ij} \cdot u_{j-ij}}{\|u_{i-ij}\| \|u_{j-ij}\|} }{ (|N|^2 - |N|) / 2 } \\
\end{align*}

In Table~\ref{T:randomCorr}, we present the mean cosine similarity for each data set we discuss in this paper. Higher cosine similarities indicate greater agreement among agents about the relative values of other agents. We also show the number of agents in each data set.

\begin{table}[h!]
\centering
\begin{tabular}{ l | c c c c}
     & $\mathcal{C}$ & $|N|$ & $\underline{k}$ & $\overline{k}$ \\
  \hline
  Random-similar 20 & 0.914 & 20 &  5 & 5 \\
  Random-scattered 20 & 0.499 & 20 & 5 & 5 \\
  Newfrat & 0.877 & 17 & 4 & 5 \\
  Freeman & 0.551 & 32 & 5 & 6 \\
\end{tabular}
\caption{Mean cosine similarity over all pairs of distinct agents, number of agents, minimum team size, and maximum team size. For random data set classes, $\mathcal{C}$ as shown is the mean over $20$ randomly generated instances of the data set class.}
\label{T:randomCorr}
\end{table}
}

\end{document}